\newcolumntype{Y}{>{\centering\arraybackslash}X}
\let \originalleft \left
\let\originalright\right
\renewcommand{\left}{\mathopen{}\mathclose\bgroup\originalleft}
\renewcommand{\right}{\aftergroup\egroup\originalright}
\newcommand{\ctext}[1]{\texttt{\small #1}}
\newcommand{\leftopen}{left open set}
\newcommand{\rightopen}{right open set}
\newcommand{\opensets}{open sets}
\newcommand{\func}[1]{{\sc #1}}
\newcommand{\augvalueone}[1]{{\mathcal{A}(#1)}}
\newcommand{\rangesweep} {{prefix-combine trees}}
\newcommand{\segtree} {{segment tree}}
\newcommand{\symdiff} {{symetric difference sets}}
\newcommand{\countmap}{{{count map}}}
\newcommand{\mb}[1]{\mbox{\func{\textbf{#1}}}}
\newcommand{\bool}{\mb{Bool}}
\newcommand{\hide}[1]{} 
\newcommand{\mapstructure}[1]{{{\begin{center}\vspace{.0em}#1\end{center}}}} 
\newcommand{\para}[1]{\vspace{0.01in}\noindent\textbf{#1 }}
\newcommand{\lc}[1]{{{l(#1)}}}
\newcommand{\rc}[1]{{{r(#1)}}}
\newcommand{\funca}{{\rho}}
\newcommand{\funcb}{{\phi}}
\newcommand{\sweepstructure}{{prefix structure}}
\newtheorem{theorem}{Theorem}
\newtheorem{corollary}{Corollary}
\newcommand{\fullin}[1]{{\if 1<2
{#1}
\else
{}
\fi}}
\begin{document}
\title{Parallel Range, Segment and Rectangle Queries \\with Augmented Maps}

\author{
Yihan Sun\\
Carnegie Mellon University\\
yihans@cs.cmu.edu
\and
Guy E.\ Blelloch\\
Carnegie Mellon University\\
guyb@cs.cmu.edu}
\date{}



\maketitle

\begin{abstract}
\hide{
The range query problem and segment query problem are fundamental problems
in computational geometry, and also have extensive applications in many areas.
Despite the large body of theory works on parallel geometry algorithms,
efficient implementations of them can be complicated.
We know of very few
practical implementations of the theoretical works due to many delicate algorithmic details.
On the other hand most of the implementation-based
works do not have tight theoretical bounds.
In this paper we focus on the fast and concise (fewer lines of code) implementations of parallel algorithms for these problems with provable theoretical cost bound.
We propose to use a simple framework (the augmented map) to model these problem, and discuss possible implementations.
Based on augmented map interface and implementations, we can develop both multi-level tree structures and sweepline algorithms addressing range queries and segment queries. All of our data structures are work-efficient to build in theory (all need $O(n\log n)$ work) and achieve a low parallel depth (poly-logarithmic for the multi-level tree structures, and $O(n^{\epsilon})$ for sweepline algorithms). In addition, the implementation of such structures, on top of the augmented maps, is very simple and concise (about 100 lines of code), as well as fast in practice.
}
The range, segment and rectangle query problems are fundamental problems in
computational geometry, and have extensive applications in many
domains.  Despite the significant theoretical work on these problems,
efficient implementations can be complicated.  We
know of very few practical implementations of the algorithms in parallel, and most
implementations do not have tight theoretical bounds.  In this paper,
we focus on simple and efficient parallel algorithms and
implementations for range, segment and rectangle queries, which have tight worst-case bound in theory and good parallel performance in practice.
We propose to use a
simple framework (the augmented map) to model the problem.  Based on
the augmented map interface, we develop both multi-level tree
structures and sweepline algorithms supporting range, segment and rectangle
queries in two dimensions.
For the sweepline algorithms, we also propose a parallel paradigm
and show corresponding cost bounds. All of our data structures are work-efficient to build in
theory ($O(n\log n)$ sequential work) and achieve a low parallel depth
(polylogarithmic for the multi-level tree structures, and
$O(n^{\epsilon})$ for sweepline algorithms).
The query time is almost linear to the output size.

We have implemented all the data structures described in the paper
using a parallel augmented map library. Based on the library each data
structure only requires about 100 lines of C++ code.  We test their
performance on large data sets (up to $10^8$ elements) and a machine
with 72-cores (144 hyperthreads). The parallel construction achieves
32-68x speedup.
Speedup numbers on queries are up to 126-fold.
Our sequential implementation outperforms the CGAL library by at least 2x
in both construction and queries. Our sequential implementation can be slightly
slower than the R-tree in the Boost library in some cases (0.6-2.5x), but has significantly better query performance (1.6-1400x)
than Boost.

\hide{
On all 144 threads, the average cost of weighted sum query is
0.01-0.7 $\mu$s. The full list of queried points can be reported in
less than $0.4$ $\mu$s for small query windows and about $1$ ms for
large windows, achieving a 65-to-100-fold speedup.

The solution to these problems
is usually some data structure based on some variant of a
range-tree, segment-tree, the sweep-line
algorithms, or some combinations.
 }

\hide{
Augmented trees are a powerful tool for solving many geometric problems sequentially or in parallel. They are often used directly as a stand-alone solution, or indirectly as components of sweepline algorithms.
Implementing these trees, however, can get quite complicated and is usually application specific. This leads to duplicated code that is usually lengthy, messy and hard to maintain, especially in the parallel setting.
In this paper we present a simple framework for developing such data structures for two fundamental geometric applications: range queries and segment queries.
The framework supports parallelism, is efficient, and especially, easy to write and understand.
We also introduce a parallel sweepline paradigm which can be applied to many problems.
Using this paradigm we demonstrate simple parallel sweepline algorithms for range queries and segment queries.
The framework is built upon an  \emph{augmented map} data structure.
In our framework we devise several multi-level tree structures and sweepline algorithms for range queries and segment queries.
All our structures are work-efficient to build in theory (all need $O(n\log n)$ work) and achieve a low parallel depth (poly-logarithmic for the multi-level tree structures, and $\tilde{O}(n^{\epsilon})$ for sweepline algorithms).

We have implemented all the data structures described in the paper using a parallel augmented map library. Based on the library each data structure only requires about 100 lines of C++ code.  We test their performance on large data sets (up to $10^8$ elements) and a machine with 72-cores (with hyperthreads). The construction times of our the data structures are all less than 10 seconds in parallel, achieving a 35-to-55-fold speedup. On all 144 threads, the average cost of weight-sum query is 0.01-0.6 $\mu$s. The full list of queried points can be reported in around $0.1$ $\mu$s for small query windows and $1$ ms for large windows, achieving a 70-to-90-fold speedup. Our sequential range query implementations also outperform existing libraries such as CGAL.}  \end{abstract}

\vspace{-.1in}
\section{Introduction}
\label{sec:intro}
The range, segment and rectangle query problems are fundamental
problems in computational geometry, and also have extensive
applications in many
domains.
In this paper, we focus on the 2D Euclidean space.
The range query problem is
to maintain a set of points,
and to answer queries regarding a given rectangle.
The segment query problem is to maintain a set of non-intersecting
segments, and
to answer questions regarding a given vertical line.
The rectangle query problem is to maintain a set of rectangles, and to answer questions
regarding a given point.
For all problems, we discuss queries of both listing all queried elements (the \emph{list-all} query),
and returning the count of queried
elements (the \emph{counting} query).
Some other queries, like the weighted sum of all queried elements,
can be done by a variant of the counting queries.
Although we only discuss these three problems,
many other problems, such as rectangle-rectangle intersection queries,
can be solved by a combination of these three queries.
Efficient solutions to these
problems are mostly based on some variant of a
range-tree~\cite{bentley1979data}, a
segment-tree~\cite{BentleyWood80}, a sweep-line
algorithm~\cite{Shamos76}, or combinations of them.

In addition to the large body of work on sequential algorithms,
there has also been research on parallel structures for such
queries~\cite{aggarwal1988parallel,Atallah89,goodrich1996sweep,AtallahG86}.
However, efficient implementations of these structures can be
complicated. We know of few theoretically efficient 
implementations in parallel for range and segment query structures.  This
challenge is not only because of the delicate design of many
algorithmic details, but also because very different forms of data
structures and optimizations are required due to varied
combinations of input settings and query types.  The parallel
implementations we know
of~\cite{Ho97,basin2017kiwi,chan1999hierarchical,parallelrtree} do not have useful
theoretical bounds.

Our goal is to develop theoretically efficient algorithms which can be
implemented with ease and also run fast in practice, especially \emph{in parallel}.
To do this, we use a simple framework 
called \emph{augmented maps}~\cite{pam}.
The augmented map is an
abstract data type (ADT) based on ordered maps of key-value pairs, augmented
with an abstract ``sum'' called the \emph{augmented value} (see Section \ref{sec:prelim}).
The original paper \cite{pam} proposes to support augmented maps using augmented trees,
and presents a parallel augmented map library PAM,
in which all functions have asymptotical optimal sequential work and polylogarithmic parallel depth.
We use this library in our experiments. 
In this paper, as an alternative to
augmented trees, we also propose the \emph{\sweepstructure{}}
to support augmented maps, which
stores the augmented values of all the prefixes of
the original map. When using the \sweepstructure{}s to represent the
outer maps for the geometry queries, the corresponding algorithm resembles
the standard sweepline algorithm.  We propose a parallel paradigm to
construct the \sweepstructure{}s, and when the input functions
satisfy certain conditions, we show the corresponding cost bounds.

\hide{Previous work~\cite{pam} shows that using augmented
trees as the underlying structure, a wide variety of functions on
(augmented) maps can be implemented efficiently, including union,
insertion, range extraction and sums, and so
on.  All the bulk operations such as union and construction can run in
parallel with asymptotically optimal work and polylogarithmic depth.}

The augmented map framework provides concise and elegant abstraction
for many geometric queries, and is extendable to a wide range of problems and query types.
In particular, we study how it can model
the range, segment and rectangle query problems, all of which are
two-level map structures: an outer level map augmented with an inner
map structure. We use augmented trees to
implement the inner maps.  Using different representations as the outer
map, we develop both algorithms based on multi-level
augmented trees, and sweepline algorithms based on the \sweepstructure{s}.  

In this paper we present ten data
structures for range, segment and rectangle queries.   Five of them are
multi-level trees including the \emph{RangeTree} (for range query), the \emph{SegTree} (for segment query),
\emph{RecTree} (for rectangle query), and another two for fast counting queries \emph{SegTree*} (segment counting) and
\emph{RecTree*} (rectangle counting). The other five are corresponding sweepline algorithms.
All these data structures are efficient both in theory and practice.
Theoretical bounds of our implementations are summarized in Table~\ref{tab:allexpcost}.
Some of our bounds are not optimal, but we note that all of them are off
optimal by at most a $\log n$ factor in work.
Some of our algorithms are
motivated by previous theoretically efficient parallel
algorithms~\cite{aggarwal1988parallel,AtallahG86}.

\hide{For the range query problem and segment query problem, we first model each application
as a two-level augmented map structure. We always use the augmented tree to implement the inner map.
We discuss two implementations of the outer map: the augmented tree structure, which makes the whole data structure
a \emph{range tree} (noted as RangeTree) or a \emph{segment tree} (noted as SegTree), and the \sweepstructure{}, which makes the whole algorithm a \emph{sweepline}
algorithm (noted as RangeSweep, SegSweep and CountSweep).
We show that all of them fit the abstraction, which allows both the description and the implementation to be concise and
easy to understand. They are also efficient both in theory and in practice assuming \emph{persistence} \cite{driscoll1986making} (the input is not affected by modifications,
and updates always yield new versions) is supported. A list of all data structures and their corresponding costs are summarized in Table~\ref{tab:allexpcost}.}

Using the augmented map abstraction greatly simplifies engineering and
reduces the coding effort as can be indicated by the required lines of code---on top
of the PAM library, each application only requires about 100 lines of C++ code even for the
parallel version. We show some example codes in Appendix \ref{app:codeexample}.
Beyond being concise, the implementation also achieves good performance. We present experimental results of our implementations and also compare them to C++ libraries CGAL\cite{CGAL} and Boost\cite{boost}, which are both hundreds lines of code for sequential algorithms.
We get a 33-to-68-fold self-speedup in construction on a
machine with 72 cores (144 hyperthreads), and 60-to-126-fold speedup
in queries. Our sequential construction outperforms CGAL by more than a factor of 2, and is comparable to Boost. Our query time outperforms both CGAL and Boost by a factor of 1.6-1400.

\hide{
The contributions of this paper include:
\begin{itemize}
\item A framework to model many computational geometry problems including the range query, segment query and segment counting query.
\item The correspondence of the multi-level tree structures and the sweepline algorithms for such geometry problems.
\item A general parallel sweepline paradigm and corresponding cost analysis on the sweepline algorithms in this paper.
\item Implementations on such problems using the augmented map framework with simple code and good performance both sequentially and in parallel.
\item Experimental evaluations of different data structures and algorithms in this paper.
\end{itemize}}

\hide{
\begin{table}[!t]
  \centering\small
    \begin{tabular}{@{ }c@{ }|@{ }c@{ }|c@{ }|ccc}
    \hline
    \multicolumn{2}{c|}{\multirow{2}{*}{}} & \textbf{Multi-level} & \multicolumn{3}{c}{\textbf{Sweepline}} \\
\cline{4-6}
\multicolumn{2}{c|}{} & \textbf{Tree} & \multicolumn{1}{c}{\textbf{Range}} & \multicolumn{1}{c}{\textbf{Rectangle}} & \multicolumn{1}{c}{\textbf{Segment}} \\
    \hline\hline
    \multirow{2}{*}{\begin{sideways}\footnotesize\textbf{\,Query}\end{sideways}} & \textbf{All} & $\log^2 n + k$ & \multicolumn{2}{@{ }c@{ }}{$k\log\left(\frac{n}{k}+1\right)$} &
    $\log n+k$  \\
& \textbf{Count} & $\log^2 n$ & \multicolumn{3}{c}{$\log n$} \\
    \hline
    \multirow{2}{*}{\begin{sideways}\footnotesize\textbf{Build\,}\end{sideways}} & \textbf{Work} & $n\log n$ & \multicolumn{3}{c}{$n\log n$} \\
& \textbf{Depth} & $\log^3 n$ & \multicolumn{3}{c}{$n^{\epsilon}$} \\
    \hline
    \end{tabular}%
    \vspace{-.1in}
    \caption{\textbf{Theoretical (asymptotical) costs of all data structures in this paper} - $n$ is the input size, $k$ the output size. $\epsilon<1$ can be any given constant. All bounds are in big-O notation. ``Multi-level Tree'' means the multi-level trees for all problems. We note that not all of them have optimal work, but they are off optimal by at most a $O(\log n)$ factor.}
  \label{tab:allexpcost}\vspace{-.3in}
\end{table}%
}

\begin{table}
  \centering\small
\begin{tabular}{@{ }c@{ }|@{ }c@{ }|@{ }c@{ }|@{ }c@{ }|@{ }c@{ }|@{ }c@{ }}
\hline
 \multicolumn{1}{c@{ }}{} & & \multicolumn{2}{@{ }c|@{ }}{\textbf{Build}} & \multicolumn{2}{@{ }c}{\textbf{Query}} \\
\cline{3-6}
 \multicolumn{1}{c}{} & &       \textbf{Work} &      \textbf{Depth} &  \textbf{All} &  \textbf{Count} \\
\hline
     \textbf{Range} & \textbf{Swp.} & $n\log n$ & $n^{\epsilon}$ & $\log n+ k\log \left(\frac{n}{k}+1\right)$ & $\log n$ \\
     \textbf{Query} &  \textbf{Tree} & $n\log n$ & $\log^3 n$ & $\log^2 n+k$ & $\log^2 n$ \\
\hline
\textbf{Seg} &   \textbf{Swp.} & $n\log n$ & $n^{\epsilon}$  & $\log n+k$ & $\log n$ \\
\textbf{Query} &    \textbf{Tree} & $n\log n$ & $\log^3 n$ & $\log^2 n + k$ & $\log^2 n$ \\
\hline
\textbf{Rec} &   \textbf{Swp.} & $n\log n$ & $n^{\epsilon}$ & $\log n+k\log \left(\frac{n}{k}+1\right)$ & $\log n$ \\
\textbf{Query} &   \textbf{Tree} & $n\log n$ & $\log^3 n$ &   $\log^2 n+k\log \left(\frac{n}{k}+1\right)$ & $\log^2 n$ \\
\hline
\end{tabular}
\vspace{-.1in}
  \caption{\textbf{Theoretical (asymptotical) costs of all problems in this paper} - $n$ is the input size, $k$ the output size. $\epsilon<1$ can be any given constant. Bounds are in big-O notation. ``Swp.'' means the sweepline algorithms. ``Tree'' means the two-level trees. We note that not all queries have optimal work, but they are off optimal by at most a $\log n$ factor.}
  \label{tab:allexpcost}\vspace{-.3in}
\end{table}

\hide{
\subsection{Old}
Augmented binary search trees are widely used both in theory and
practice to support a variety of geometric queries, including range
reporting, segment intersection, rectangle intersection, and point
location.  The trees are usually based on some variant of a
range-tree~\cite{bentley1979data},
segment-tree~\cite{BentleyWood80}, or they rely on sweep-line
algorithms~\cite{Shamos76}. Sometimes combinations of both are used  (e.g. a
sweep-line algorithm using a segment-tree).  In addition to the large
body of work on sequential data structures, there has also been some work
on parallel structures for such queries~\cite{Atallah89,goodrich1996sweep}.
Efficient implementations of these structures, however, can be complicated (particularly so for parallel structures), because of the different combinations of
input settings, queries and augmentations.

Here we present a simple framework for developing parallel
(and sequential) data structures for a variety of low-dimensional geometric queries.
The framework is built on an interface, and
corresponding library, that supports \emph{parallel augmented maps}.
Augmented maps are based on ordered maps of key-value pairs
augmented with an abstract ``sum'' called the \emph{augmented value}.
They are supported by associating two functions with a map: one converts a key-value pair into an
augmented value, and one combines augmented values.
Using augmented balanced binary search trees
as the underlying structure, augmented maps can efficiently implement a wide variety of functions on maps including union,
intersection, insertion, deletion, and range extraction.
All the bulk operations such as union, intersection and construction can run in
parallel with asymptotically optimal work and polylogarithmic depth.

In this paper, we develop data structures both directly based on
multi-level augmented maps, and based on the sweepline method
using augmented maps as a component.
We formalize the abstract framework for both, and propose a parallel paradigm for
a class of sweepline algorithms. In particular we study two fundamental problems in
computational geometry: range queries and segment queries.
We propose 5 data structures
based on our abstract framework: multi-level augmented trees such as the range
tree (RangeTree), the segment tree (SegTree), a tree for answering
counting segment queries (SegCount), as well as two sweepline algorithms
for range queries (RangeSweep) and segment queries (SegSweep) respectively.
We show that all of them fit the abstraction, which allows both the description and the implementation to be concise and
easy to understand. They are also efficient both in theory and in practice assuming \emph{persistence} \cite{driscoll1986making} (the input is not affected by modifications,
and updates always yield new versions) is supported. A list of all data structures and their corresponding costs are summarized in Table~\ref{tab:allexpcost}.

\hide{
An
important feature of the interface, for our purposes, is that all the
structures are functional (i.e. updates make no modifications to
existing data).  This has two important features---they are safe for
parallelism, and the data is persistent.    Indeed parallelism is
pretty much invisible to the users.
}

We implemented all data structures described in this paper using a
parallel augmented map library PAM \cite{ourppopp}. With a
well-defined interface and abstractions of the underlying structures,
each application only requires about 100 lines of C++ code for the
parallel version.  We present experimental results comparing our
algorithms and also comparing to CGAL range trees
\cite{overmars1996designing}, which is over 600 lines of C++ code for
a sequential range tree, excluding the code for the bottom-level tree
structure. Our implementation can construct structures of size $10^8$
within 10 seconds. We get 40-to-60-fold speedup in construction on a
machine with 72 cores (144 hyperthreads), and 80-to-100-fold speedup
in queries. Even our sequential version outperforms CGAL by 50\%.

}

\section{Related Work}
\label{sec:related}
Many data structures are designed for solving range, segment and rectangle queries such as range trees~\cite{bentley1979decomposable}, segment trees\cite{bentley1980optimal}, kd-trees~\cite{kdtree}, R-trees~\cite{rstartree,packedrtree,seeger1991multi}, priority trees\cite{prioritytree}, and many others~\cite{willard1985new,CLRS,de2000computational,overmars1988geometric}, which are then applied to later research in various areas ~\cite{Agarwal16,jacob2017,ahle2017parameter,bille2015compressed,goodrich2010priority,brisaboa2016aggregated,sitchinava2012computational,ajwani2010geometric,arge2003optimal,arge1999two,bentley1979data,pagel1993towards,blankenagel1994external}.
Many of them are augmented tree structures.
The standard range tree has construction time $O(n\log n)$ and query time $O(k+\log^2 n)$ for input size $n$ and output size $k$. Using fractional cascading \cite{lueker1978data,willard1978predicate}, the query time can be reduced to $O(k+\log n)$. We did not employ such optimizations, but instead show that the our version using parallel augmented maps achieve good parallelism in practice, and is simple and easy for engineering. The terminology ``segment tree'' refers to different data structures in the literature. Our version is similar to some previous works \cite{Chaselle1984,AtallahG86,aggarwal1988parallel}. Previous solutions for rectangle queries usually use combinations of range trees, segment trees, interval trees, and priority trees~\cite{edelsbrunner1983new,cheng1990efficient,edelsbrunner1981intersection}. Their algorithms are all sequential.
Many previous results focus on developing fast sequential sweepline algorithms for range queries~\cite{arge2005cache,arge2006simple}, segment intersecting~\cite{mehlhorn1994implementation,chazelle1992optimal} and rectangle queries~\cite{mccreight1980efficient}.

In the parallel setting, there has also been a lot of theoretical works~\cite{aggarwal1988parallel,atallah1986efficient,Atallah89,goodrich1996sweep}.
Atallah et al.~\cite{Atallah89} propose cascading divide-and-conquer scheme for solving many computational geometry problems in parallel.
Goodrich et al.~\cite{goodrich1996sweep} propose a framework to parallelize several sweepline-based algorithms.
We know of no experimental evaluation of these algorithms.
There are also parallel implementation-based works such as parallel R-trees~\cite{parallelrtree}, parallel sweepline algorithms~\cite{mckenney2009parallel}, and algorithms focusing on distributed systems~\cite{zheng2006distributed} and GPUs~\cite{yu2011parallel}.
No theoretical guarantees are provided in these papers.
There also have been papers on I/O efficient computational geometry problems  \cite{sitchinava2012computational,ajwani2010geometric,arge2003optimal,arge1999two}.

\section{Preliminaries}
\label{sec:prelim}

\para{Notation.} We call a key-value pair in a map an \emph{entry} denoted as $e=(k,v)$. We use $k(e)$ and $v(e)$ to extract the key and the value from an entry. Let $\langle P\rangle$ be a sequence of elements of type $P$. For a tree node $u$, we use $k(u)$, $\lc{u}$ and $\rc{u}$ to extract its key, left child and right child respectively.

On the 2D planar, let $X$, $Y$ and $D=X\times Y$ be the types of x- and y-coordinates and the type of points, where $X$ and $Y$ are two sets with total ordering defined by $<_X$ and $<_Y$ respectively.
For each point $p\in D$ on the 2D planar, we use $x(p)\in X$ and $y(p)\in Y$ to extract its x- and y-coordinates, and use a pair $(x(p), y(p))$ to denote $p$. For simplicity,
we assume all coordinates are unique. Duplicates can be resolved by slight variants of algorithms in this paper.

\para{Parallel Cost Model.} To analyze asymptotic theoretical
 costs of a parallel algorithm we use \emph{work} $W$ and \emph{depth} $D$ (or \emph{span} $S$), where work is the
total number of operations and depth is the length of the critical
path.  We can bound the total runtime in terms of work and depth since
any computation with $W$ work and $D$ depth will run in time
$T < \frac{W}{P} + D$ assuming a PRAM \cite{jaja1992introduction}
with $P$ processors and a greedy
scheduler~\cite{Graham69,Brent74,BL98}.  We assume concurrent reads and exclusive writes (CREW).

\para{Persistence.} A \emph{persistent} data structure \cite{persistent} is a data structure that preserves the previous version of itself when being modified and always yields a new updated structure. For BSTs, persistence can be achieved by path copying~\cite{Sarnak86}, in which only the affected path related to the update is copied, such that the asymptotical cost remains unchanged.
In this paper, we assume underlying persistent tree structures. In experiments, we use a library (PAM) supporting persistence~\cite{pam}.

\para{Sweepline Algorithms.} A sweepline algorithm (or plane sweep algorithm) is an algorithmic paradigm that uses a conceptual sweep line to process elements in order in Euclidean space~\cite{Shamos76}. It uses a virtual line sweeping across the plane, which stops at some points (e.g., the endpoints of segments) to make updates. We call the points the \emph{event points} $p_i\in P$. They are processed in a
total order defined by $\prec:P\times P\mapsto \bool$.
Here we limit ourselves to cases where the events are known ahead of time. As the algorithm processes the points, a data structure $T$ is maintained and updated at each event point to track the status at that point. Sarnak and Tarjan~\cite{Sarnak86} first noticed that by being persistent, one can keep the intermediate structures $t_i\in T$ at all event points, such that they can be used for later queries.
Indeed in the two sweepline algorithms in this paper, we adopt the same methodology, but parallelize it. We call them the \emph{\sweepstructure{s}} at each point.

Typically in sweepline algorithms, on encountering an event point $p_i$ we compute $t_i$ from the previous structure $t_{i-1}$ and the new point $p_i$ using an \emph{update function} $h:T\times P\mapsto T$, i.e., $t_i=h(t_{i-1}, p_i)$.  The initial structure is $t_0$.   A sweepline algorithm can
therefore be defined as the five tuple:
\begin{align}
\boldsymbol{S} = \text{\bf SW}(P,\prec,T,t_0,h)
\end{align}
\hide{
\begin{center}
\small
\begin{tabular}{|r@{ }c@{ }c@{ }@{ }c@{ }@{ }l@{ }l|}
  \hline
 $\bf S$&=& \textbf{SW}&(& $\boldsymbol{P}$; \quad $\boldsymbol{\prec}$: $P\times P\mapsto \bool$; \quad $\boldsymbol{T}$; \quad $\boldsymbol{t_0\in T}$; &\\
&&&&$\boldsymbol{h}:P\times T \mapsto T$ &)\\
  \hline
\end{tabular}
\end{center}
}
It defines a function that takes a set of points $p_i\in P$ and returns a mapping from each point to a \sweepstructure{} $t_i\in T$.


\hide{
\subsection{The Parallel Augmented Map Library}
In this paper, to implement the applications using the augmented map abstraction, we the Parallel Augmented Map Library \cite{ourppopp}. This library supplies very simple and practical interfaces for augmented maps, which are efficient, parallel and persistent. This library uses augmented binary search trees as underlying data structure. Entries are stored in tree nodes, sorted by keys. Besides keys and values, each node also keeps another attribute that is the \emph{augmented value} of the sub-map in the subtree rooted at $u$. Usually the \emph{value} stores some extra information about \emph{the single tree node} in addition to \emph{key}, and the \emph{augmented value} reflects some property (usually an abstract ``sum'') about  \emph{the whole subtree}
We denote $\lc{u}$ and $\rc{u}$ as the left and right child of a tree node $u$. Considering the subtree rooted at $u$ consists of its left subtree, the entry stored in $u$ and its right subtree, then the augmented value of $u$ can be computed using the base function and combine function as:
$$\mathcal{A}(u) = f(\mathcal{A}(\lc{u}), g(k(u),v(u)), \mathcal{A}(\rc{u}))$$
because of the associativity of function $f$.

In the PAM library, all the functions are work-efficient, which means that they are optimal in parallel work and is also optimal when run sequentially. This is essential in guaranteeing the efficiency of our implementations on geometry applications. For example, set functions like \texttt{union} and \texttt{diff} have sublinear work ($O(m\log (n/m))$ on two sets of maps $m$ and $n\ge m$) and logarithmic depth ($O(\log n \log m)$). Some functions used in this paper along with their sequential and parallel costs are listed in Appendix \ref{app:pam} Table \ref{tab:costs}.

The PAM library supports data persistent, which means that any update to a data structure creates a new version, without affecting the old version. This is done by path copying, meaning that only the affected path related to the update is copied, such that the functions are still asymptotically efficient. Tree nodes are largely shared across different versions of trees, making implementations space-efficient. For example, the insert function copies the path from the root to where the key should be inserted to. Only $O(\log n)$ extra space is costed. This is crucial in guaranteeing the efficiency and correctness of many applications mentioned in this paper.

More information about the PAM library and is C++ interface can be found in Appendix \ref{app:pam}. In the rest of the paper we will directly use the interfaces and properties introduced in this section and show solutions to range queries and segment queries on top of the library.

\hide{From the ordered property of the tree structure and the associative of $f$, we know that the augmented value at a tree node $u$ can be calculated from the augmented values of its two subtrees and the entry of the node itself:
$\mathcal{A}(u) = f(\mathcal{A}(\lc{u}), g(k(u),v(u)), \mathcal{A}(\rc{u}))$.}

\hide{An augmented map \texttt{augmented\_map<K,V,Aug>} is defined using C++ templates specifying the key type, value type and the necessary information about the augmented values. The interface of PAM can be summarized as follows:
\begin{itemize}\setlength\itemsep{-.15em}\vspace{-.25em}
\item \texttt{\textbf{K}} - The key type, which must support a comparison function($<$) that defines a total order over its elements.
\item \texttt{\textbf{V}} - The value type.
\item \texttt{\textbf{Aug}} - The augmentation type. It is a struct containing the following attributes and methods:
\begin{itemize}\setlength\itemsep{-.20em}\vspace{-.5em}
\item \texttt{Typename \textbf{aug\_t}} that is the augmented value type $A$.
\item Static method \texttt{\textbf{base}(K, V)} that defines the augmented base function $g : K \times V \mapsto A$.
\item Static method \texttt{\textbf{combine}(aug\_t, aug\_t)} that defines the augmented combine function $f : A \times A \mapsto A$.
\item Static method \texttt{\textbf{identity}()} that returns $a_{\emptyset}$.
\end{itemize}
\end{itemize}
}
}
\hide{
\subsection{Sweepline Algorithms}
A sweep line algorithm (or plane sweep algorithm) is an algorithmic paradigm that uses a conceptual sweep line to process elements in order in Euclidean space. It uses a virtual line (often a vertical line) sweeping across the plane, stopping at some points to make updates to some data structure. Usually the positions that the sweep line stops are determined by the input data, e.g., the points or endpoints of segments. We call them the \emph{event points}. In many sweepline algorithms, especially the two mentioned in this paper, a special data structure is maintained corresponding to the event points, in order to track some information from the beginning up to this event point.

In this paper, we propose sweepline algorithms dealing with the range query and the segment query respectively. We also provide parallel version of these algorithms with reasonable parallel depth.
}

\begin{table}
\footnotesize\centering
\begin{tabular}{>{\bf}l<{}@{}|@{ }c@{ }|@{ }c@{ }}
\hline
\textit{\textbf{Function}} &\textit{\textbf{Work}}& \textit{\textbf{Depth}}\\
\hline
\mb{Insert$(m,e,\sigma)$}, \mb{Delete$(m,k)$}&$\log n$&$\log n$\\
\hline
\mb{Intersect}$(m_1,m_2)$& \multirow{3}{*}{$n_1\log \left(\frac{n_1}{n_2}+1\right)$}&\multirow{3}{*}{$\log n_1\log n_2$}\\
\mb{Difference}$(m_1,m_2)$ & &\\
\mb{Union}$(m_1, m_2, \sigma)$& &\\
\hline
\mb{Build}$(s, \sigma)$ & $n\log n$&$\log n$\\
\hline
\mb{UpTo}$(m,k)$, \mb{Range}$(m,k_1,k_2)$ & $\log n$&$\log n$\\
\hline
\mb{ALeft}$(m,k)$, \mb{ARange}$(m,k_1,k_2)$  &$\log n$&$\log n$\\
\hline
\end{tabular}
\caption{\textbf{Core functions on augmented map interface} - $k,k_1,k_2\in K$. $m, m_1, m_2$ are maps, $n=|m|, n_i=|m_i|$. $s$ is a sequence.  All bounds are in big-O notation. The bounds assume the augmenting functions $f$, $g$ and argument function $\sigma$ all have constant cost.}
    \label{tab:mapfunctions}
    \vspace{-.2in}
\end{table}
\para{The Augmented Map.}
The \emph{augmented map} \cite{pam} is an abstract data type (ADT) that associates an ordered \emph{map} (a set of key-value pairs) with a ``map-reduce'' operation for keeping track of the abstract sum over entries (referred to as the \emph{augmented value} of the map). More formally, an augmented map is an ordered map $M$ where keys belong to some ordered set $K$ (with total ordering defined by relation $<_K$) and values to a set $V$, that is associated with two functions: the \emph{base function} $g:K\times V \mapsto A$ that maps a key-value pair to an augmented value (from a set $A$) and the \emph{combine function} $f: A\times A \mapsto A$ that combines (reduces) two augmented values.  We require $f$ to be associative and have an identity $I$ (i.e., set $A$ with $f$ and $I$ forms a monoid).
An augmented map can therefore be defined as the seven tuple:
\begin{align}
a = \text{\bf AM}(K,<_K,V,A,g,f,I)
\end{align}


\hide{From the total order of $K$ and the associativity of $f$, we know the augmented value of a set is unique.
Specifically, note that $f$ has identity $a_{\emptyset}$, we know the augmented value on an empty set $\mathcal{A}(\emptyset)=a_{\emptyset}$.}

\hide{
\mapstructure{
{\small
\centering
\begin{tabular}{|@{}r@{ }c@{ }c@{ }c@{ }l@{ }l|}
  \hline
  \textbf{AugMap} &=& \textbf{AM}&(& $\boldsymbol{K}$;\, $\boldsymbol{\prec}: K\times K\mapsto\bool$;\, $\boldsymbol{V}$;& \\
  &&&&$\boldsymbol{A}$; \,$\boldsymbol{g}: K\times V \mapsto A$; \,$\boldsymbol{f}:A\times A \mapsto A$; \, $\boldsymbol{I}\in A$ &)\\
  \hline
\end{tabular}
}}}

Then the augmented value of a map $M=((k_1,v_1), \dots, (k_n,v_n))$, denoted as $\mathcal{A}(M)$, is defined as $\mathcal{A}(M) = f(g(k_1, v_1), g(k_2, v_2), \dots, g(k_n, v_n))$, where the definition of the binary function $f$ is extended as:
\begin{align}
\nonumber
f(\emptyset) &= I, f(a_1) = a_1,\\
\nonumber
f(a_1, a_2, \dots, a_n) &= f(f(a_1, a_2, \dots a_{n-1}), a_n)
\end{align}

A list of common functions on augmented maps used in this paper, and their parameters, are shown in Table \ref{tab:mapfunctions} (the first column).
Functions related to augmentations that are useful in this paper include: the \mb{ARange} function which returns the augmented value of all entries within a key range, and \mb{ALeft}$(k)$ which returns the augmented value of all entries up to a key $k$. More details are in \cite{pam} and Appendix \ref{app:pam}.

\hide{The abstraction we propose in this paper does not rely on the underlying implementation of augmented maps, nor any specific balancing schemes. However some bounds and proofs may be only applicable on specific data structures (e.g., the insertion on range trees requires a weight-balanced tree to guarantee the theoretical bound, and an efficient query on the segment tree especially makes use of the tree structure).
In the experiments we use a library PAM \cite{pam} using weight-balanced trees implementing augmented maps (See details in the appendix).}

\para{Augmented Maps Using Augmented Trees.}
An efficient implementation of augmented maps is to use augmented balanced binary search trees~\cite{pam}. Entries are stored in tree nodes and sorted by keys. Each node also maintains the \emph{augmented value} of the subtree rooted at it. Using join-based algorithms~\cite{join,pam} on trees, the augmented map interface can be supported in an efficient and highly-parallelized manner, and the costs are listed in Table \ref{tab:mapfunctions}. All functions listed in Table \ref{tab:mapfunctions} have optimal work and polylog depth. In the experiment we use a parallel library PAM \cite{pam} which implements augmented maps using augmented trees. The cost of functions in the library matches the bounds in Table \ref{tab:mapfunctions}.

\hide{In this paper, in addition to augmented trees, we propose \emph{\sweepstructure{s}}
to support
augmented maps.   This provides efficient interface especially for
sweepline algorithms.}





\hide{For example, normal lookup, insertion or deletions requires $O(\log n)$ time. Bulk functions like \texttt{union} and \texttt{diff} have sublinear work ($O(m\log (n/m))$ on two maps of size $m$ and $n\ge m$) and logarithmic depth ($O(\log n \log m)$) \cite{ours}. Construction on $n$ elements costs $O(n\log n)$ work and $O(\log n)$ depth. It is also easy to get the augmented sum over any range using $O(\log n)$ invocations of the base and combine functions by combining the partial sums.}


\section{Augmented Maps Using Prefix Structures}
\label{sec:augsweep}
In this paper, as an alternative to the tree structure, we propose to use the \emph{\sweepstructure{s}} as in the sweepline algorithms to represent augmented maps. We especially use \sweepstructure{s} to represent the outer map structure in range, segment and rectangle queries, which makes the algorithm equivalent to a sweepline algorithm.
For an augmented map $m=\{e_1,\dots, e_{|m|}\}$, \sweepstructure{s} store the augmented values of all prefixes up to each entry $e_i$, i.e., $\mb{ALeft}(m,k(e_i))$. For example, if the augmented value is the sum of values, the \sweepstructure{s} are prefix sums.
This is equivalent to using a combination of function $f$ and $g$ as the update function. That is to say, an augmented map $m=\text{\bf AM}(K,\prec,V,A,f,g,I)$ is equivalent to (or, can be represented by) a sweepline scheme $S$ as:
\begin{align}
\label{eqn:sweep}
S=\text{\bf SW}(K\times V, \prec, A,t_0\equiv I,h(t,p)\equiv f(t,g(p)))
\end{align}
In many cases a much simpler update function $h(t,p)$ can be provided as a replacement for $f(t,g(p)))$.

\para{A Parallel Sweepline Paradigm.}
Here we present a parallel algorithm to build the \sweepstructure{}s. Because of the associativity of the combine function, to repeatedly update a sequence of points $\langle p_i\rangle$ onto a \sweepstructure{} $t$ using $h$ is equivalent to directly combining the augmented value on all points $\langle p_i\rangle$ to $t$ using the combine function $f$.
Thus our approach is to evenly split the input sequence of points into $b$ blocks, calculate the augmented value of each block, and then refine the \sweepstructure{}s in each block using the update function $h$. For simplicity we assume $n$ is an integral multiple of $b$ and $n=b\times r$. We define a \emph{fold function} $\funca :\langle P \rangle \mapsto T$ that converts a sequence of points into the augmented value.  
The parallel sweepline paradigm can therefore be defined as the six tuple:
\begin{align}
 \label{eqn:ps}
\bf S'=\text{\bf PS}(&\boldsymbol{P}; \boldsymbol{\prec}; \boldsymbol{T}; \boldsymbol{t_0};\boldsymbol{h}; \boldsymbol{\funca}; \boldsymbol{f})
\end{align}
The algorithm to build the \sweepstructure{}s is as follows (also see Algorithm \ref{algo:sl}):

\begin{algorithm}[!t]
  \KwIn{A list $p$ of length $n$ storing all input points in order, the update function $h$, the fold function $\funca$, the combine function $f$, the empty \sweepstructure{} $t_0=I$ and the number of blocks $b$. We assume $n=br$.
    }
\KwOut{A series of \rangesweep{} $t_i$.}
    \vspace{0.5em}
    \DontPrintSemicolon
    \begin{minipage}{\columnwidth}
    \SetKwFor{myforone}{[Step 1] parallel\_for}{do}{endfor}
    \SetKwFor{myfortwo}{[Step 2] for}{do}{endfor}
    \SetKwFor{myforthree}{[Step 3] parallel\_for}{do}{endfor}
    \myforone {$i\leftarrow 0$ to $b-1$} { \label{line:rwstep111}
        $t'_i=\funca(p_{i\times r}, \dots, p_{(i+1)\times r-1})$ \label{line:rwstep1e}
    }
    \myfortwo {$i\leftarrow 1$ to $b-1$} {\label{line:rwstep222}
        $t_{i\times r} = f(t_{(i-1)\times r}, t'_{i-1})$ \label{line:rwstep2ee}
    }
    \myforthree {$i\leftarrow 0$ to $b-1$} { \label{line:rwstep333}
        $s = i\times r$; $e = s+r-1$\\
        \lFor {$j\leftarrow s$ to $e$} {
            $t_{j} = h(t_{j-1}, p_j)$
        }
        \label{line:rwstep3ee}
    }
    \Return {$\{ p_i \mapsto t_i\}$}
    \end{minipage}\hfill
    \begin{minipage}{\columnwidth}
\begin{flushright}
   \includegraphics[width=\columnwidth]{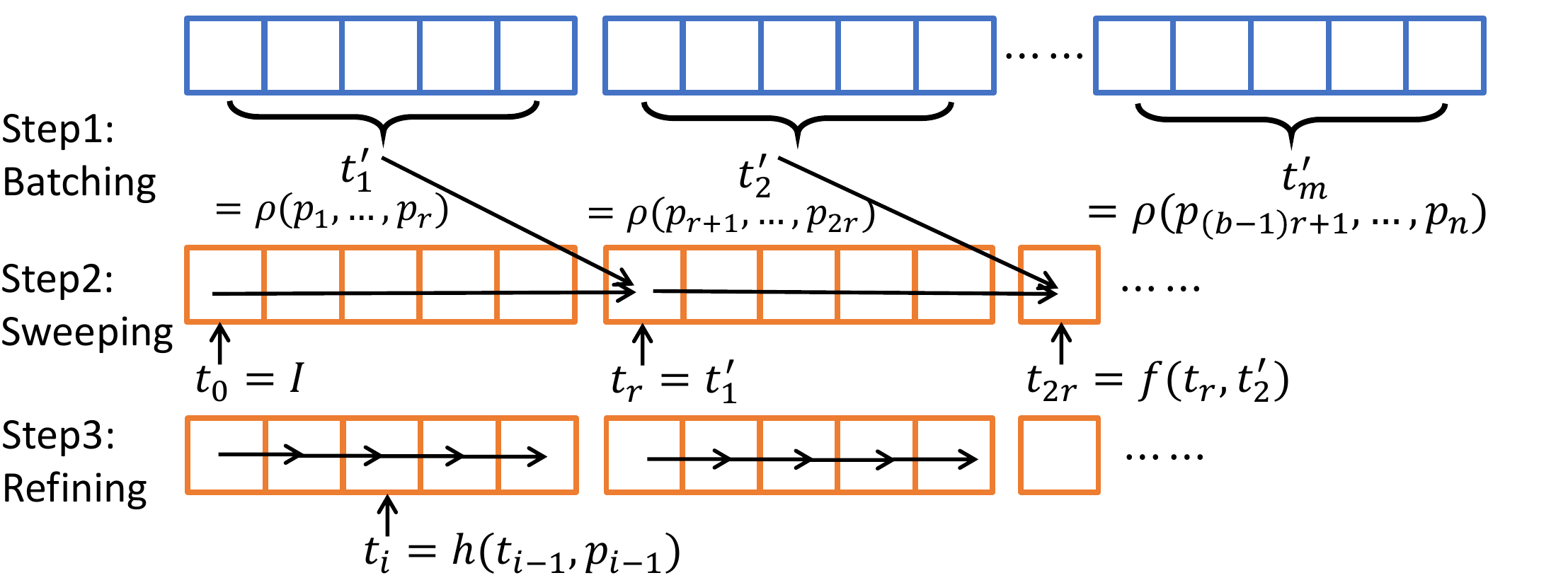}\\
\end{flushright}
    \end{minipage}
\caption{The construction of the \sweepstructure{}.}\label{algo:sl}
\end{algorithm}

\hide{
\begin{SCfigure}
  \includegraphics[width=0.6\columnwidth]{figures/sweep.pdf}\\
  \caption{An illustration for the parallel sweepline paradigm. The blue boxes denotes the event points. The orange boxes are the \sweepstructure{}s. }\label{fig:sweepline}
\end{SCfigure}}

\begin{enumerate}[noitemsep,topsep=0pt]
\item \textbf{Batch}. Assume all input points have been sorted by $\prec$. We evenly split them into $b$ blocks and then in parallel generate $b$ augmented values (partial sums) $t'_{i}\in T$ using $\funca$, each corresponding to one of the $b$ blocks. 

\item \textbf{Sweep}. These partial sums $t'_i$ are combined in turn sequentially by the combine function $f$ to get the first \sweepstructure{} $t_0,t_{r},t_{2r}\dots$ in each block using $t_{i\times r} = f(t_{(i-1)\times r}, t'_i)$.

\item \textbf{Refine}. All the other \sweepstructure{}s are built based on $t_0, t_r,t_{2r}\dots$ (built in the second step) in the corresponding blocks. All the $b$ blocks can be done in parallel. In each block, the points $p_i$ are processed in turn to update its previous \sweepstructure{} $t_{i-1}$ sequentially using $h$.
\end{enumerate}

Here $\prec:P\times P\mapsto \bool$, $t_0\in T$, $h:T\times P\mapsto T$, $\funca:\langle P\rangle \mapsto T$ and $f:T\times T\mapsto T$ are as defined above. In many non-trivial instantiations of this framework (especially those in this paper), each \sweepstructure{} keeps an ordered set tracking some elements related to a subset of the processed event points, having size $O(i)$ at point $i$. The combine function is some set function (e.g.,\mb{Union}), which typically requires $O(n_2\log (n_1/n_2+1))$ work for combining a block of size $n_2$ with the current \sweepstructure{} of size $n_1\ge n_2$~\cite{join}. Accordingly, the function $h$ simply updates one element (e.g., an insertion corresponding to a \mb{Union}) to the structure, costing $O(\log n)$ on a structure of size $n$.
Creating the augmented value of each block of $r$ points (using $\funca$) costs $O(r\log r)$ work and $O(\log r)$ depth, building a \sweepstructure{} of size at most $r$. A common setting of the related functions is summarized in Table \ref{tab:sweepcost}, and the corresponding bounds of the sweepline algorithm is given in Theorem \ref{thm:sweeptheory}.
\begin{table}
  \centering
  \small{
\begin{tabular}{@{}c@{}|@{ }c@{ }|@{ }c@{ }|@{ }c@{}}
\hline
&$\boldsymbol{h(t,p)}$ & $\boldsymbol{\funca(s)}$ & $\boldsymbol{f(t,t')}$ \\
\hline
\textbf{Work} & $O(\log |t|)$ & $O(|s|\log |s|)$ & $O(|t'|\log (|t|/|t'|+1))$\\
\hline
\textbf{Depth} & $O(\log |t|)$ & $O(\log |s|)$ & $O(\log |t|\log |t'|)$\\
\hline
\textbf{Output} & $O(|t|)$ & $O(|s|)$ & $O(|t'|+|t|)$\\
\hline
\end{tabular} }\vspace{-.1in}
  \caption{\textbf{A typical setting of the function costs in a sweepline paradigm.}}\label{tab:sweepcost}\vspace{-.2in}
\end{table}

\begin{theorem}
\label{thm:sweeptheory}
A sweepline paradigm $S$ as in Equation \ref{eqn:sweep} can be built in parallel using its corresponding parallel paradigm $S'$ (Equation \ref{eqn:ps}).  If the bounds as shown in Table \ref{tab:sweepcost} hold, then Algorithm \ref{algo:sl} can construct all \sweepstructure{}s in work $O(n\log n)$ and depth $O(\sqrt{n}\log^{1.5}n)$.
\end{theorem}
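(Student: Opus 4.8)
The plan is first to establish correctness of Algorithm~\ref{algo:sl}---that its output is exactly the sequence of \sweepstructure{}s $t_i$ produced by the sequential sweep---and then to bound the work and depth of the three phases using the cost assumptions of Table~\ref{tab:sweepcost}, finally choosing the number of blocks $b$ (equivalently $r=n/b$) to balance the depth terms. Correctness rests on the identity $h(t,p)=f(t,g(p))$ from Equation~\ref{eqn:sweep} together with associativity of $f$: processing the $r$ points of block $i$ one at a time via $h$, starting from any structure $t$, produces precisely $f(t,\funca(\langle p_{ir},\dots,p_{(i+1)r-1}\rangle))$, where $\funca$ computes the augmented value of the block in isolation. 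Hence the \emph{Sweep} phase correctly computes the first \sweepstructure{} of each block as $t_{ir}=f(t_{(i-1)r},t'_{i-1})$ by a telescoping of the $t'_j$'s onto $t_0=I$, and the \emph{Refine} phase, which reprocesses each block's $r$ points starting from its (now correct) initial structure, rebuilds the remaining $t_j$; since distinct blocks in \emph{Refine} depend only on their starting structures, they run independently in parallel.

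For the cost, I would analyze the phases in turn. In \emph{Batch}, the $b$ parallel calls to $\funca$ on blocks of $r$ points cost $O(r\log r)$ work and $O(\log r)$ depth each, totalling $O(br\log r)=O(n\log n)$ work and $O(\log n)$ depth; by the output bound, $|t'_i|=O(r)$. In \emph{Sweep}, induction on $i$ using the output bound of $f$ gives $|t_{ir}|=O(ir)$, so the $i$-th combine costs $O\!\left(|t'_{i-1}|\log\!\left(|t_{(i-1)r}|/|t'_{i-1}|+1\right)\right)=O(r\log(i+1))$ work; summing, $\sum_{i=1}^{b-1}O(r\log(i+1))=O(r\log(b!))=O(rb\log b)=O(n\log b)=O(n\log n)$, and since these $b-1$ combines run sequentially with depth $O(\log|t|\log|t'|)=O(\log^2 n)$ apiece, the phase has depth $O(b\log^2 n)$. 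In \emph{Refine}, each block performs $O(r)$ sequential $h$-updates on structures of size $O(n)$, i.e., $O(r\log n)$ work and depth per block, so over all $b$ blocks in parallel this is $O(br\log n)=O(n\log n)$ work and $O(r\log n)$ depth. Adding up, total work is $O(n\log n)$ and total depth is $O\!\left(\log n+b\log^2 n+r\log n\right)=O\!\left(b\log^2 n+(n/b)\log n\right)$.

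It remains to optimize over $b$. The two depth terms are equal when $b\log^2 n=(n/b)\log n$, i.e., $b=\Theta\!\left(\sqrt{n/\log n}\right)$, for which both terms become $\Theta(\sqrt{n}\log^{1.5}n)$, yielding the claimed depth while leaving the work at $O(n\log n)$. I expect the only delicate point to be the \emph{Sweep}-phase work bound: one must observe that $|t_{ir}|$ grows only linearly in $i$ and then exploit the $\log(|t|/|t'|+1)$ factor in the cost of $f$ to collapse $\sum_i \log(i+1)$ to $O(b\log b)$ rather than the naive $O(b\log n)$; without this savings the construction would fail to be work-efficient. (The off-by-one handling of block boundaries in the pseudocode should also be checked, but it does not affect the asymptotics.)
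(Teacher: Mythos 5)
Your proof follows essentially the same route as the paper: the same Batch/Sweep/Refine cost analysis under the assumptions of Table~\ref{tab:sweepcost} (the paper handles the correctness/associativity point in the text preceding the theorem), the same per-phase work bounds, and the same choice $b=\Theta(\sqrt{n/\log n})$ balancing the sweep depth against the refine depth $O((n/b)\log n)$. One small inaccuracy in your closing remark: even the naive estimate of $O(r\log n)$ per combine yields only $O(br\log n)=O(n\log n)$ work for the Sweep phase, so the refined $\log(i+1)$ accounting (giving $O(n\log b)$) sharpens that phase but is not actually needed for the claimed $O(n\log n)$ work bound.
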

\begin{proof}
\hide{
\vspace{-.5em}
\mapstructure{
{\small
\centering
\begin{tabular}{|r@{ }@{ }c@{ }@{ }c@{ }@{ }c@{ }@{ }lllllll@{ }l|}
  \hline
  $\boldsymbol{A_S}$ &=& \textbf{PS}&(& $\boldsymbol{P}$: $P_0$; & $\boldsymbol{\prec}$: $<_P$; &$\boldsymbol{T}$: $T_0$; & $\boldsymbol{h}$: insert or delete &&&&)\\
  $\Rightarrow\boldsymbol{A_P}$ &=& \textbf{PP}&(& $\boldsymbol{P}$: $P_0$; & $\boldsymbol{\prec}$: $<_P$; &$\boldsymbol{T}$: $T_0$; & $\boldsymbol{h}$: insert or delete&$\boldsymbol{T'}$: $\langle T_0, T_0\rangle$; & $\boldsymbol{\funca}$: $\rho_{0}$; &$\boldsymbol{\funcb}$: $(\langle t_I,t_D\rangle,t)\mapsto t\cup t_I \backslash t_D$ &)\\
  \multicolumn{1}{|r}{, where}&\multicolumn{11}{l|}{$T_0$ is a set of element $E$, $\boldsymbol{\funca_{0}}: \{p_i\}\mapsto \langle t_I, t_D\rangle$. $t_I$ stores all $e\in E$ to be inserted related to points in sequence $\{p_i\}$,} \\
  &&&&&&\multicolumn{5}{l}{$t_D$ for those to be deleted.}&\\
  \hline
\end{tabular}
}}}
The algorithm and its cost is analyzed as follows:
\begin{enumerate}[noitemsep,topsep=1pt]
\item \textbf{Batch}. Build $b$ units of $t'_i\in T$ using function $\funca$ in each block in parallel. There are $b$ such structures, each of size at most $n/b$, so it takes work $O(b\cdot \frac{n}{b}\log \frac{n}{b})=O(n\log \frac{n}{b})$ and depth $O(\log \frac{n}{b})$.

\item \textbf{Sweep}. Compute $t_r,t_{2r}\dots$ by combining $t'_i$ of each block with the previous \sweepstructure{} using the combine function $f$, sequentially. The calculation of each \sweepstructure{} is sequential, but the combine function works in parallel. The size of $t'_i$ is no more than $O(n/b)$. Considering the given work and depth bounds of the combine function, the total work of this step is bounded by: $O\left(\sum_{i=1}^{b} r\log (\frac{ir}{r} +1)\right)=O(n\log b)$. The depth is:
$O\left(\sum_{i=1}^{b}\log r \log ir\right)=O(b\log \frac{n}{b}\log n)$.
\item \textbf{Refine}. Build all the other \sweepstructure{}s using $h$. The total work is: $O\left(\sum_{i=1}^{n} \log i\right)=O(n\log n)$. We process each block in parallel, so the depth is $O(\frac{n}{b}\log n)$.
\end{enumerate}

In total, it costs work $O(n\log n)$ and depth $O\left(\left(b\log \frac{n}{b}+\frac{n}{b}\right)\log n\right)$. When $b=\Theta(\sqrt{n/\log n})$, the depth is $O(\sqrt{n}\log^{1.5}n)$.
\end{proof}

By repeatedly applying this process to each block in the last step, we can further reduce the depth.

\begin{corollary}
\label{coro:depth}
A sweepline paradigm $S$ as in Equation \ref{eqn:sweep} can be parallelized using its corresponding parallel paradigm $S'$ (Equation \ref{eqn:ps}).  If the bounds as shown in Table \ref{tab:sweepcost} hold, then we can construct all \sweepstructure{s} in work $O(\frac{1}{\epsilon}n\log n)$ and depth $\tilde{O}(n^{\epsilon})$ for arbitrary small $\epsilon>0$.
\end{corollary}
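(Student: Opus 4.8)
The plan is to apply the construction of Theorem~\ref{thm:sweeptheory} recursively: instead of sweeping each block of size $r=n/b$ sequentially in Step~3 (Refine) of Algorithm~\ref{algo:sl}, we invoke the same parallel paradigm $S'$ (Equation~\ref{eqn:ps}) on that block, passing the already-computed prefix structure $t_{ir}$ (produced in Step~2) as the initial structure $t_0$ for the recursive call. We stop the recursion after a fixed number $\ell$ of levels and finish the bottom-level blocks by the plain sequential sweep. Concretely I would choose $b=\Theta(n^{\epsilon})$ and $\ell=\lceil 1/\epsilon\rceil$, so that $b^{\ell}\ge n$ and the blocks at the bottom recursion level contain $O(1)$ points. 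Since the Batch, Sweep and Refine steps depend only on the point subsequence handled and on the sizes of the structures involved, the recursion is well-defined on a sub-range starting from a nonempty $t_0$.

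Next I would set up the depth bound. Index the recursion so that depth $j$ (with $j=1$ for the top-level split, up to $j=\ell$) partitions the input into $b^{j}$ pieces of size $n/b^{j}$; the Batch/Sweep steps at consecutive recursion levels are sequential, while all pieces within one level are handled inside parallel recursive calls. At each level the Batch step contributes only $O(\log n)$ depth. In the Sweep step of a level-$j$ block, one sequentially combines $b$ partial sums, each of size at most $n/b^{j}$, into a prefix structure of size at most $n$; by Table~\ref{tab:sweepcost} each combine has depth $O(\log n\,\log(n/b^{j}))=O(\log^2 n)$, so that block's Sweep step has depth $O(b\log^2 n)$. Summing over the $\ell$ levels gives depth $O(\tfrac1\epsilon n^{\epsilon}\log^2 n)=\tilde O(n^{\epsilon})$, and the bottom-level sequential refine over $O(1)$-size blocks adds only $O(\log n)$.

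Then I would bound the work. At recursion depth $j$ there are $b^{j}$ sub-blocks of size $n/b^{j}$; running $\funca$ over all of them costs $b^{j}\cdot O(\tfrac{n}{b^{j}}\log\tfrac{n}{b^{j}})=O(n\log n)$ per level. For the Sweep step, combining a size-$(n/b^{j})$ chunk into a prefix structure of size $\le n$ costs $O(\tfrac{n}{b^{j}}\log(b^{j}+1))=O(j\log b\cdot\tfrac{n}{b^{j}})$, and there are $b^{j}$ such chunks in a level, giving $O(j\,n\log b)$; summing over $j\le\ell$ yields $O(\ell^2 n\log b)$. Using $\ell=O(1/\epsilon)$ and $\log b=O(\epsilon\log n)$, both the Batch total ($O(\ell n\log n)$) and the Sweep total ($O(\ell^2 n\log b)$) are $O(\tfrac1\epsilon n\log n)$, and the bottom-level sequential sweep is $O(n\log n)$. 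Hence the total work is $O(\tfrac1\epsilon n\log n)$, as claimed.

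The main obstacle — really the only nontrivial point — is that the recursive calls start from a nonempty initial prefix structure whose size can be $\Theta(n)$, so the combine costs in the recursive Sweep steps are governed by the global size $n$ rather than the block size; one must check that the resulting $j$-dependent $\log(n/b^{j})$ factors telescope to only an extra $O(1/\epsilon)$ factor in the work and a polylogarithmic factor in the depth, which is exactly the geometric-series bookkeeping sketched above. A secondary, routine point is fixing the stopping rule ($\ell=\lceil 1/\epsilon\rceil$, equivalently bottom blocks of size $O(1)$) so that the leftover sequential refine is dominated by the rest of the cost.
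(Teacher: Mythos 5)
Your proposal is correct and takes essentially the same route as the paper's own proof of Corollary~\ref{coro:depth}: the paper also recursively re-applies the paradigm of Theorem~\ref{thm:sweeptheory} to each block inside the Refine step for a constant number of rounds $c=O(1/\epsilon)$, obtaining work $O(c\,n\log n)$ and depth $\tilde{O}(cb+n/b^{c})$. The only difference is parameterization---the paper balances the two depth terms by taking $b\approx n^{1/(c+1)}$ and finishing blocks of size roughly $n^{\epsilon}$ with a sequential sweep, whereas you take $b=\Theta(n^{\epsilon})$ and recurse down to $O(1)$-size blocks---and your bookkeeping of the sweep-step work (the $j\log b$ factors coming from the size-$n$ initial prefix structures) is, if anything, more explicit than the paper's.
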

We give the proof in Appendix \ref{app:rangesweepdepth}.


In this paper, we use \sweepstructure{s} to represent the outer maps for the range, segment and rectangle queries. They all fit the parallel paradigm in Algorithm \ref{algo:sl}, and accord with the assumption on the function cost in Theorem \ref{thm:sweeptheory}. It is also easy to implement such a parallel algorithm. We show the code in Appendix \ref{app:sweepcode}, which is no more than half a page.

\begin{table*}[!t]
{\small
\begin{tabular}{r@{}r@{ }@{}c@{ }@{}c@{ }@{ }c@{ }@{}l@{ }@{}l@{ }@{}l@{ }@{}l@{ }@{}l@{}@{ }l@{}@{ }l@{}@{}l@{}}
\hline
\multicolumn{13}{@{}l}{\normalsize{\textbf{* Range Query:}}}\\
\multicolumn{2}{@{}r@{}}{\bf(Inner Map)$\boldsymbol{R_I}$} &=& \textbf{AM} &(& $\boldsymbol{K}$: $D$;& $\boldsymbol{\prec}$: $<_{Y}$;&$\boldsymbol{V}$: $\mathbb{Z}$; &$\boldsymbol{A}$: $\mathbb{Z}$;& $\boldsymbol{g}$: $(k,v)\mapsto 1$; &$\boldsymbol{f}$: $+_{\mathbb{Z}}$; &$\boldsymbol{I}$: $0$ &)\\
\bf -&{\bf RangeMap $\boldsymbol{R_M}$} &=& \textbf{AM} &(& $\boldsymbol{K}$: $D$;& $\boldsymbol{\prec}$: $<_{X}$; &$\boldsymbol{V}$: $\mathbb{Z}$; &$\boldsymbol{A}$: $R_I$; & $\boldsymbol{g}$: $R_I$.singleton; &$\boldsymbol{f}$: $R_I$.union;& $\boldsymbol{I}$: $\emptyset$ &\multicolumn{1}{@{}l@{}}{)}\\
\bf -&{\bf RangeSwp $\boldsymbol{R_S}$}&=& \textbf{PS} &(& $\boldsymbol{P}$: $D$; &$\boldsymbol{\prec}$: $<_X$; & $\boldsymbol{T}$: $R_I$;&
   $\boldsymbol{t_0}$: $\emptyset$ &$\boldsymbol{h}$: $R_I$.insert& $\boldsymbol{\rho}$: $R_I$.build; & $f$: $R_I$.union &\multicolumn{1}{@{}l@{}}{)}\\
  \hline
\multicolumn{13}{@{}l}{\textbf{\normalsize{* Segment Query:}}}\\
\multicolumn{2}{@{}r@{}}{\bf(Inner Map)$\boldsymbol{S_I}$}&=& \textbf{AM}&(& $\boldsymbol{K}$: $D\times D$;& $\boldsymbol{\prec}$: $<_{Y}$; & $\boldsymbol{V}$: $\emptyset$;& $\boldsymbol{A}$: $\mathbb{Z}$; &$\boldsymbol{g}$: $(k,v)\mapsto 1$; &$\boldsymbol{f}$: $+_{\mathbb{Z}}$; &$\boldsymbol{I}$: $0$ &)\\
  \bf -&{\bf SegMap $\boldsymbol{S_M}$}&=& \textbf{AM}&(& $\boldsymbol{K}$: $X$;& $\boldsymbol{\prec}$: $<_{X}$; &$\boldsymbol{V}$: $D\times D$;&$\boldsymbol{A}$: $S_I\times S_I$; & $\boldsymbol{g}$: $g_{\text{seg}}$ &$\boldsymbol{f}$: $f_{\text{seg}}$&$\boldsymbol{I}$: $(\emptyset,\emptyset)$&\multicolumn{1}{@{}l@{}}{)}\\
\multicolumn{1}{l}{}&&\multicolumn{11}{@{}l@{}}{$\boldsymbol{g_{\text{seg}}(k, (p_l, p_r))}$: $\begin{cases}
        (\emptyset, S_I \text{.singleton}(p_l,p_r), \text{when } k = x(p_l)\\
        (S_I\text{.singleton}(p_l,p_r), \emptyset), \text{when } k = x(p_r)\\
        \end{cases}$, $\boldsymbol{f_{\text{seg}}}$: See Equation \ref{eqn:segcombine};} \\
  \bf -&{\bf SegSwp $\boldsymbol{S_S}$}&=& \textbf{PS}&(& $\boldsymbol{P}$: $D\times D$;&$\boldsymbol{\prec}$: $<_X$;&$\boldsymbol{T}$: $S_I$;& $\boldsymbol{t_0}$: $\emptyset$;&$\boldsymbol{h}$: $h_{\text{seg}}$;& $\boldsymbol{\funca}$: $\rho_{\text{seg}}$;& $\boldsymbol{f}$: $f_{\text{seg}}$ &\multicolumn{1}{@{}l@{}}{)}\\
  \multicolumn{13}{@{ }l@{}}{$\boldsymbol{h_{\text{seg}}(t,p)}=$ $\begin{cases}
        S_I.\text{insert}(t,p), \text{when~}p\text{~is a left endpoint} \\
        S_I.\text{delete}(t,p), \text{when~}p\text{~is a right endpoint} \\
        \end{cases}, \boldsymbol{\funca_{\text{seg}}}(\langle p_i\rangle)=\langle L,R\rangle\begin{cases}
        L\in S_I\text{: segments with \emph{right} endpoint in }\langle p_i\rangle\\
        R\in S_I\text{: segments with \emph{left} endpoint in }\langle p_i\rangle\\
        \end{cases}$}\\
        \hline
\multicolumn{13}{@{}l}{\textbf{\normalsize{* Rectangle Query:}}}\\
\multicolumn{2}{@{}r@{}}{\bf(Inner Map)$\boldsymbol{G_I}$}&=& \textbf{AM}&(& $\boldsymbol{K}$: $Y$;& $\boldsymbol{\prec}$: $<_{Y}$; & $\boldsymbol{V}$: $D\times D$;& $\boldsymbol{A}$: $Y$; &$\boldsymbol{g}$: $(k,(p_l,p_r))\mapsto y(p_r)$; &$\boldsymbol{f}$: $\max_Y$; &$\boldsymbol{I}$: $-\infty$ &)\\
  {\bf -}&\bf RecMap $\boldsymbol{G_M}$&=& \textbf{AM}&(& $\boldsymbol{K}$: $X$;& $\boldsymbol{\prec}$: $<_{X}$; &$\boldsymbol{V}$: $D\times D$;&$\boldsymbol{A}$: $G_I\times G_I$; & $\boldsymbol{g}$: $g_{\text{rec}}$ &$\boldsymbol{f}$: $f_{\text{rec}}$&$\boldsymbol{I}$: $(\emptyset,\emptyset)$&\multicolumn{1}{@{}l@{}}{)}\\
  {\bf -} &\bf RecSwp $\boldsymbol{G_S}$&=& \textbf{PS}&(& $\boldsymbol{P}$: $D\times D$;&$\boldsymbol{\prec}$: $<_X$;&$\boldsymbol{T}$: $G_I$;& $\boldsymbol{t_0}$: $\emptyset$;&$\boldsymbol{h}$: $h_{\text{rec}}$;& $\boldsymbol{\funca}$: $\rho_{\text{rec}}$;& $\boldsymbol{f}$: $f_{\text{seg}}$ &\multicolumn{1}{@{}l@{}}{)}\\
\multicolumn{1}{l}{}&&\multicolumn{11}{@{}l@{}}{$g_{\text{rec}}$, $f_{\text{rec}}$, $h_{\text{rec}}$ and $\rho_{\text{rec}}$ are defined similarly as $g_{\text{seg}}$, $f_{\text{seg}}$, $h_{\text{seg}}$ and $\rho_{\text{seg}}$} \\
       \hline
\end{tabular}
}
\caption{\textbf{Definitions of all structures in this paper} - Although this table seems complicated, we note that it fully defines all data structures used in this paper. $X$ and $Y$ are types of x- and y-coordinates. $D=X\times Y$ is the type of a point.}
\label{tab:allstructures}\vspace{-.2in}
\end{table*}

\section{2D Range Query}
\label{sec:range}

Given a set of $n$ points in the
plane, a \emph{range query} looks up
some information of points within a rectangle defined by a horizontal range
$(x_L, x_R)$ and vertical range $(y_L, y_R)$.

\hide{The 2D range query can be answered with a two-level map structure, as shown in Table \ref{tab:allstructures}.  The outer map ($M_R$) stores all the input points ordered by x-coordinates, and each node in this tree stores
an inner tree with all nodes in its
subtree ordered by y-coordinates. }

The 2D range query can be answered using a two-level map structure \emph{RangeMap}, each level corresponding to one dimension of the coordinates. It can answer both counting queries and list-all queries.
The definition (the outer map $R_M$ with inner map $R_I$) and an illustration are shown in Table \ref{tab:allstructures} and Figure \ref{fig:illustration} (a). In particular, the key of the outer map is the coordinate of each point and the value is the count. The augmented value of such an outer map, which is the inner map, contains the same set of points, but sorted by y-coordinates. Therefore, the base function of the outer map is just a singleton on the point and the combine function is \mb{Union}. The augmented value of the inner map counts the number of points in this (sub-)map.
Then the construction of a sequence $s$ of points can be done with the augmented map interface as: $r_M=R_M.\mb{Build}(s)$.

To answer queries, we use two nested range searches $(x_L,x_R)$ on the outer map and $(y_L,y_R)$ on the corresponding inner map. The counting query can be represented using the augmented map interface as:
\begin{align}
\label{eqn:rangequery}
\nonumber
\mb{RangeQuery}&(r_M,x_L,x_R,y_L,y_R) =\\
 R_I.\mb{ARange}&(R_M.\mb{ARange}(r_M,x_L,x_R), y_L, y_R)
\end{align}
The list-all query can be answered similarly using $R_I$.\mb{Range} instead of $R_I$.\mb{ARange}.

In this paper we use augmented trees for inner maps. We discuss two implementations of the outer map: the augmented tree, which makes the outer tree a range tree, and the \sweepstructure{}s, which makes the algorithm a sweepline algorithm.

\subsection{2D Range Tree}
\label{sec:rangetree}
If the outer map is supported using the augmented tree structure, the \emph{RangeMap} becomes a range tree (\emph{RangeTree}). In this case we do not explicitly build $R_M.\mb{ARange}(r_M,x_L,x_R)$ in queries. Instead, as the standard range tree query algorithm, we search the x-range on the outer tree, and conduct the y-range queries on the related inner trees. This operation is supported by the function \mb{AProject} in the augmented map interface and the PAM library. Such a tree structure can be constructed within work $O(n\log n)$ and depth $O(\log^3 n)$ (theoretically the depth can be easily reduced to $O(\log^2 n)$, but in the experiments we use the $O(\log^3 n)$ version to make fewer copies on data). It answers the counting query in $O(\log^2 n)$ time, and report all queried points in $O(k+\log^2 n)$ time for output size $k$. Similar discussion of range trees is shown in \cite{pam}. In this paper, we further discuss efficient updates on \emph{RangeTree} using the augmented map interface in Appendix \ref{app:wbfast}.

\hide{More precisely, the query is equivalent to

\begin{tabular}{c}
$\mb{RQ}(r_M,x_L,x_R,y_L,y_R) =\mb{AProject}(g', +_{W}, r_M, x_L, x_R),$\\
$\quad\text{, where } g':R_I\mapsto W, g'(r_I) = \mb{ARange}(r_I,y_L,y_R)$
\end{tabular}
This costs $O(\log^2 n)$ per query.}

\hide{To bound the amortized cost of an update, we especially use the weight-balanced tree as the balancing scheme. For a subtree with size $n$, imbalance occur at least every $O(n)$ updates, each cost work $O(n)$, so the amortized cost on each level is a constant. We accumulate the cost across all levels, and the total (amortized) cost of rotations of a single update is $O(\log n)$. In addition, when no imbalance occurs, we need to combine the newly-inserted node with all inner trees on the insertion path, which touches $O(\log n)$ inner trees, each cost extra $O(\log n)$. In all the amortized cost of a single update is $O(\log^2 n)$.}

\subsection{The Sweepline Algorithm}
\label{sec:rangesweep}
In this section, we present a parallel sweepline algorithm \emph{RangeSwp} for 2D range query using our parallel sweepline paradigm, which can answer counting queries quickly. We use the \sweepstructure{s} to represent the outer map. Then each \sweepstructure{} is an inner map tracking all points up to the current point. The combine function of the outer map is \mb{Union}, so the update function $h$ can be an insertion. The corresponding fold function $\rho$ builds an inner map from a list of points. The definition of such a sweepline paradigm $R_S$ is shown in Table \ref{tab:allstructures}.

The theoretical bound of the functions (\mb{Insert}, \mb{Build}, and \mb{Union}) on the inner map, when supported by the augmented trees, are consistent with the assumptions in Theorem \ref{thm:sweeptheory}. Thus the theoretical cost of this algorithm directly follows from Theorem \ref{thm:sweeptheory}. Also, if persistence is supported by path-copying, this data structure takes $O(n\log n)$ space instead of trivially $O(n^2)$. Note that in previous work \cite{persistent} a more space-efficient version (linear space) is shown, but our point here is to show that our paradigm is generic and simple for many different problems without much extra cost.

\para{Answering Queries.} Computing $\mb{ARange}(r_M,x_L,x_R)$ explicitly in Equation \ref{eqn:rangequery} on \emph{RangeSwp} can be costly. We note that it can be computed by taking a \mb{Difference} on the \sweepstructure{} $t_R$ at $x_R$ and the \sweepstructure{} $t_L$ at $x_L$ (each can be found by a binary search). If only the count is required, a more efficient query can be applied.
We can compute the number of points in the range $(y_L,y_R)$ in $t_{L}$ and $t_{R}$ respectively, using \mb{ARange}, and the difference of them is the answer.
Two binary searches cost $O(\log n)$, and the range search on y-coordinate costs work $O(\log n)$. Thus the total cost of a single query is $O(\log n)$.

\para{Extension to Report All Points.} This sweepline algorithm can be inefficient in list-all queries. Here we propose a variant for list-all queries in Appendix \ref{app:rangesweeplist}.
The cost of one query is $O(\log n+ k\log (\frac{n}{k}+1))$ for output size $k$. Comparing with \emph{RangeTree}, which costs $O(k+\log^2 n)$ per query, this algorithm is asymptotically more efficient when $k<\log n$.

\hide{As mentioned above, the final answer is the difference of two such sets (subsets of $T_{x_l}$ and $T_{x_r}$). Since the complexity of the difference function depends on the size of both sets ($O(n\log (m/n+1))$), it is possible that the subset of $T_{x_l}$ is larger than the output size and the cost would be far higher than optimal. For example, when $x_l=O(n)$ and $x_r=x_l+c$ where $c$ is a constant, and the query range of y-coordinate is $(-\infty, +\infty)$. In this case reporting all points by taking the difference of $T_{x_l}$ and $T_{x_r}$ requires $O(n)$ time, but using a range tree would only take $O(\log^2 n)$ time (considering the output size is no more than $c$).}

\hide{In all, the augmented structure is defined as:
{\ttfamily\small
\begin{lstlisting}[language=C++,frame=lines,escapechar=@]
struct aug_max {
  using aug_t = int;
  static aug_t base(int y, int x){ return x;}
  static aug_t combine(int a, int b){ return max(a, b);}
  static aug_t get_empty() {return 0;}
};
\end{lstlisting}
}
}


\begin{figure*}[t]
\vspace{-.1in}
\begin{tabular}{@{}cc@{}}
\textbf{    (a) Range Query}&\textbf{    (b) Segment Query}\\
  \includegraphics[width=.48\columnwidth]{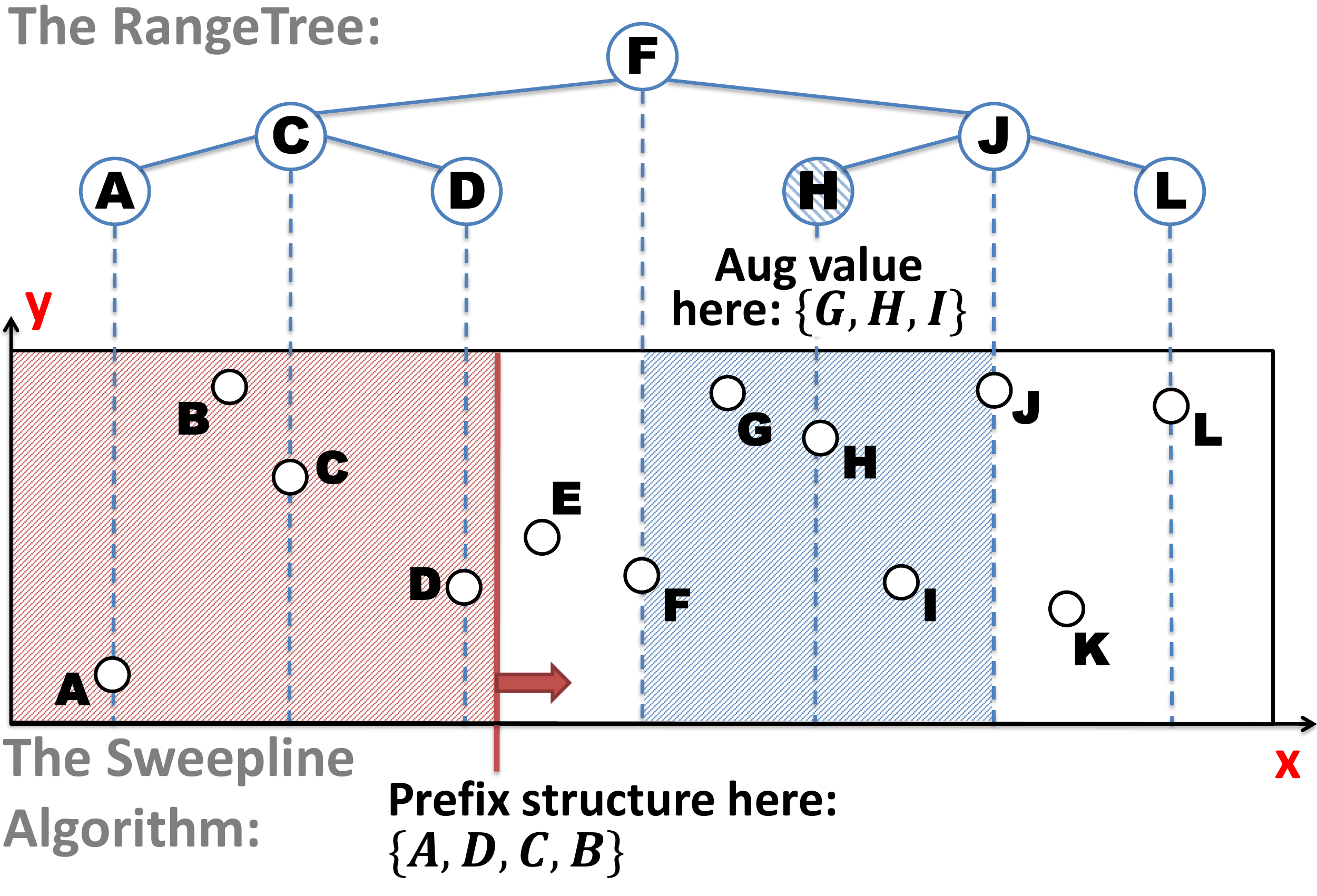}&
  \includegraphics[width=.48\columnwidth]{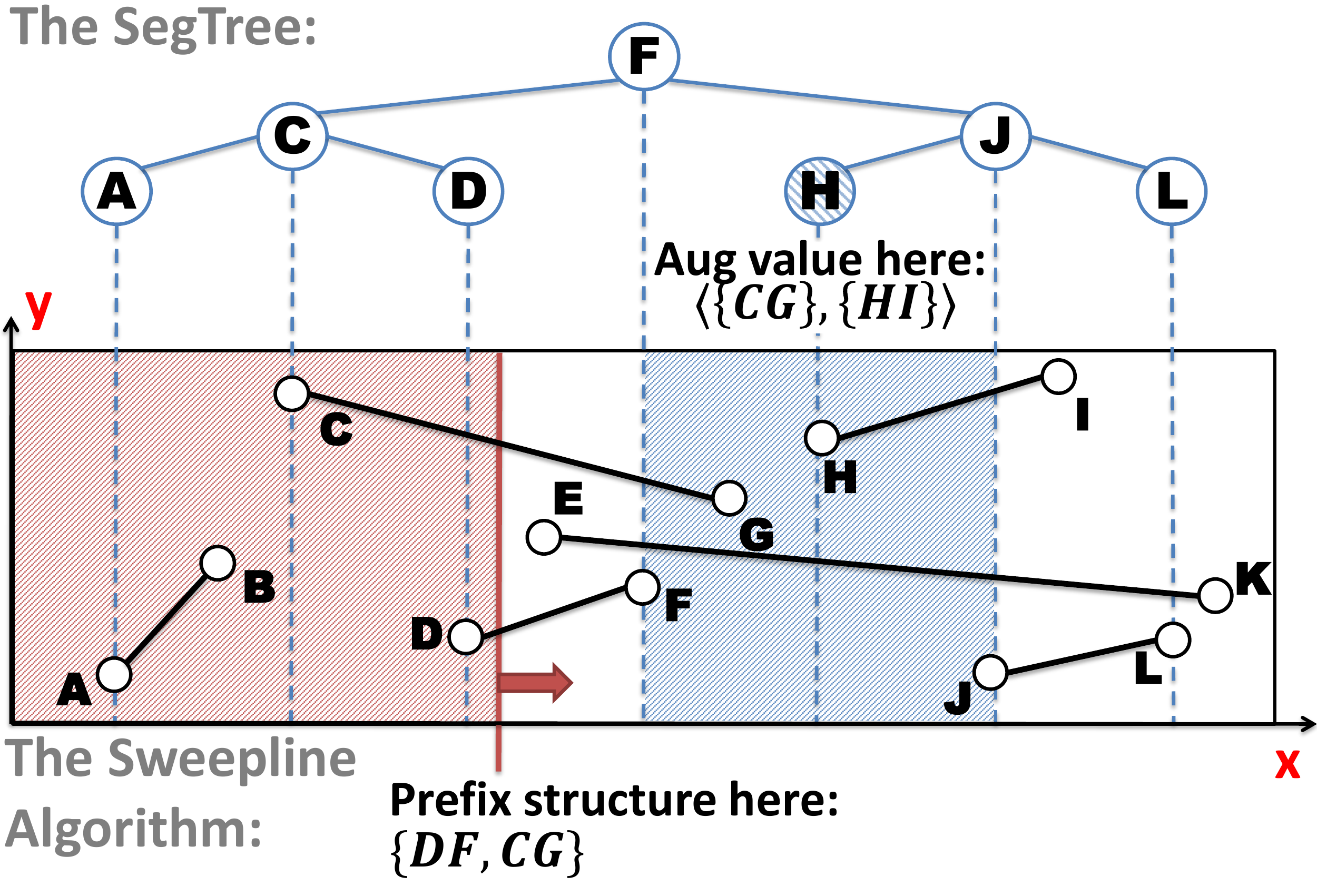}\\
\textbf{    (c) Rectangle Query}&\textbf{    (d) Segment Count Query}\\
  \includegraphics[width=.48\columnwidth]{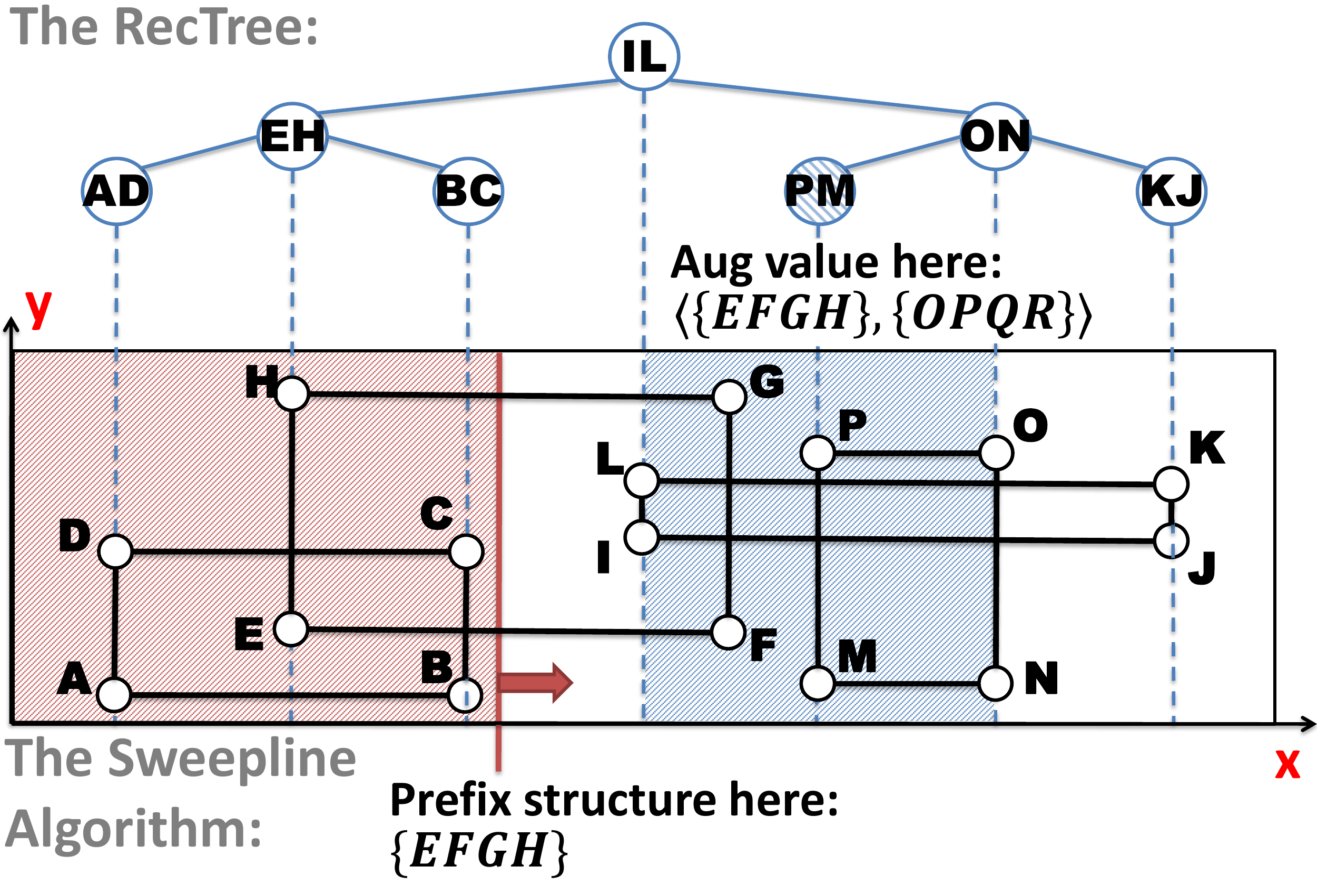}&
  \includegraphics[width=.48\columnwidth]{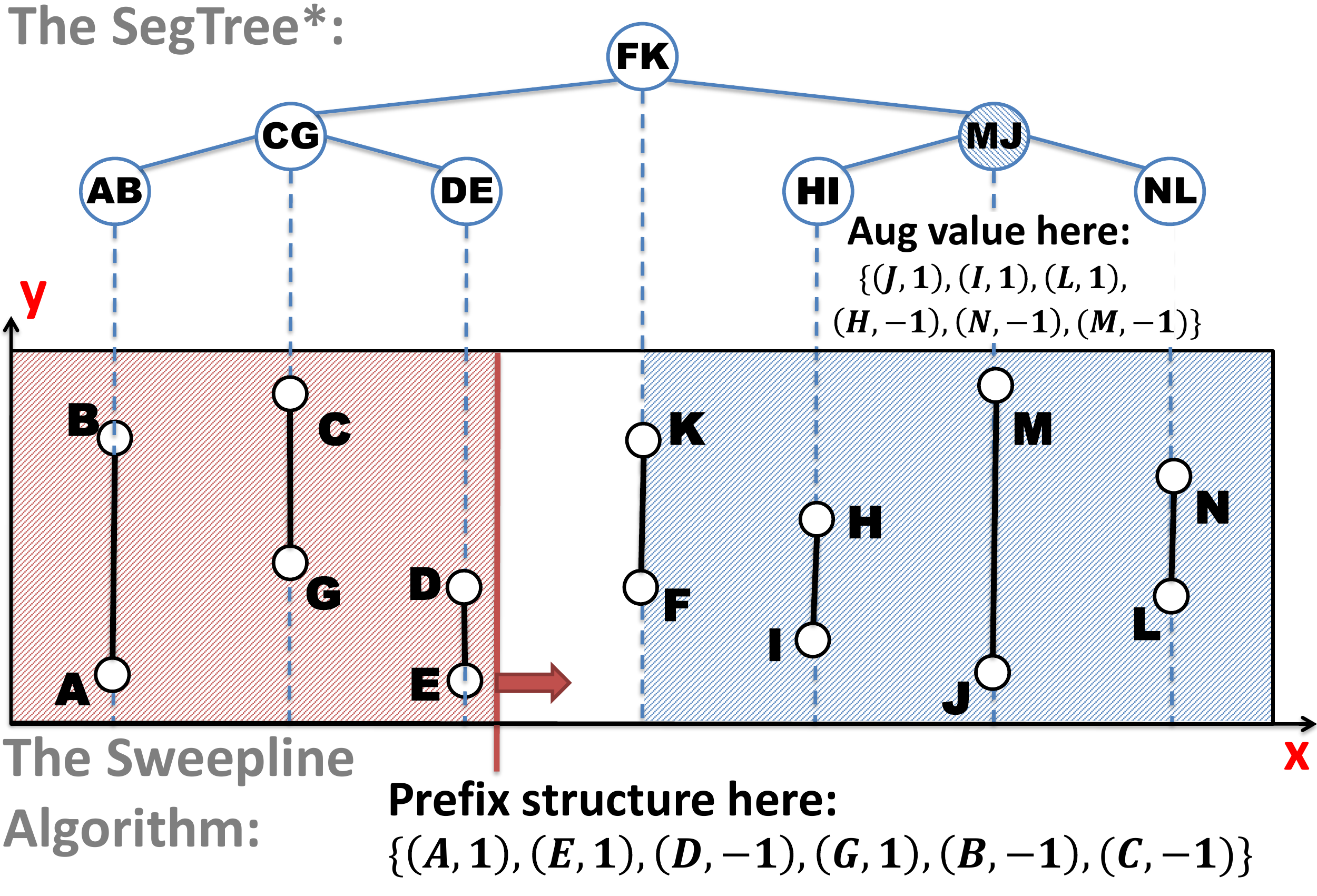}\\
  \end{tabular}
\caption{\textbf{An illustration of all data structures introduced in this paper} - The input data are shown in the middle rectangle. We show the tree structures on the top, and the sweepline algorithm on the bottom.
  All the inner trees (the augmented values or the \sweepstructure{}s) are shown as sets (or a pair of sets) with elements listed in sorted order. } \label{fig:expfigures}
  \label{fig:illustration}
\end{figure*}
\section{2D Segment Query}
\label{sec:segment}
Given a set of non-intersecting 2D segments, and a vertical segment $S_q$, segment query requires some information related to all the segments that cross $S_q$ to be reported.
We note a segment as its two endpoints $(l_i, r_i)$ where $x(l_i)\le x(r_i)$, and say it \emph{starts from $x(l_i)$} and \emph{ends at $x(r_i)$}.

In this paper we introduce a two-level map structure \emph{SegMap} addressing this problem (shown in Table \ref{tab:allstructures} and Figure \ref{fig:illustration} (b)).
The keys of the outer map are the x-coordinates of all endpoints of the input segments.
Each value stores the corresponding segment.
Each (sub-)outer map corresponds to an interval on the x-axis (from the leftmost to the rightmost key in the sub-map), noted as the \emph{x-range} of this map. The augmented value of an outer map is a pair of inner maps: $L(\cdot)$ (the \emph{\leftopen{}}) which stores all input segments starting outside of its x-range and ending inside (i.e., only the right endpoint is in its x-range), and symmetrically $R(\cdot)$ (the \emph{\rightopen{}}) with all segments starting inside but ending outside. We call them the \emph{\opensets{}} of the corresponding interval. The \opensets{} of an interval $u$ can be computed by combining the \opensets{} of its sub-intervals. In particular, suppose $u$ is composed of two contiguous intervals $u_l$ and $u_r$, then $u$'s \opensets{} can be computed by an associative function $f_{\text{seg}}$ as:
\begin{align}
\nonumber
\boldsymbol{f_{\text{seg}}}&\left(\langle L(u_l),R(u_l)\rangle,\langle L(u_r),R(u_r)\rangle\right) = \\
\label{eqn:segcombine}
&\langle L(u_l)\cup(L(u_r) \backslash R(u_l)), R(u_r)\cup(R(u_l)\backslash L(u_r)) \rangle
\end{align}
Intuitively, taking the \rightopen{} as an example, it stores all segments starting in $u_r$ and going beyond, or those stretching out from $u_l$ but \emph{not} ending in $u_r$. We use $f_{\text{seg}}$ as the combine function of the outer map.

The base function $g_{\text{seg}}$ of the outer map (as shown in Table \ref{tab:allstructures}) computes the augmented value of a single entry. For an entry $(x_k, (p_l, p_r))$, the interval it represents is the solid point at $x_k$. WLOG we assume $x_k=x(p_l)$ such that the key is the left endpoint. Then the only element in its \opensets{} is the segment itself in its \rightopen{}. If $x_k>x_v$ it is symmetric.

We organize all segments in an inner map sorted by their y-coordinates and augmented by the count, such that in queries, the range search on y-coordinates can be done in the corresponding inner maps.
We note that all segments in a certain inner tree must cross one common x-coordinate. For example, in the \leftopen{} of an interval $i$, all segments must cross the left border of $i$. Thus we can use the y-coordinates at this point to determine the ordering of all segments.
Note that input segments are non-intersecting, so this ordering of two segments is consistent at any x-coordinate.
The definition of such an inner map is in Table \ref{tab:allstructures} (the inner map $S_I$).
The construction of the two-level map \emph{SegMap} ($S_M$) from a list of segments $B=\{(p_l,p_r)\}$ can be done as follows:
\begin{align}
\nonumber
s_M=&S_M.\mb{Build}(B'), \text{ where }\\
\nonumber
B' &= \{(x(p_l),(p_l,p_r)), (x(p_r),(p_l,p_r)): (p_l,p_r)\in B\}
\end{align}

Assume the query segment is $(p_s, p_e)$, where $x(p_s)=x(p_e)=x_q$ and $y(p_s)<y(p_e)$.
The query will first find all segments that cross $x_q$, and then conduct a range query on $(y(p_s),y(p_e))$ on the y-coordinate among those segments. To find all segments that cross $x_q$, note that they are the segments starting before $x_q$ but ending after $x_q$, which are exactly those in the \rightopen{} of the interval $(-\infty,x_q)$. This can be computed by the function \mb{ALeft}.
The counting query can be done using the augmented map interface as:
\begin{align*}
\mb{SegQuery}&(s_M,p_s,p_e) = S_I.\mb{ARange} \\
&\left(\mb{ROpen}(S_M.\mb{ALeft}(s_M,x(p_t))), y(p_s),y(p_e)\right)
\end{align*}
where \mb{ROpen}$(\cdot)$ extracts the \rightopen{} from an open set pair. The list-all query can be answered similarly using $S_I$.\mb{Range} instead of $S_I$.\mb{ARange}.

We use augmented trees for inner maps. We discuss two implementations of the outer map: the augmented tree (which yields a segment-tree-like structure \emph{SegTree}) and the \sweepstructure{}s (which yields a sweepline algorithm \emph{SegSwp}). 
We also present another two-level augmented map (\emph{Segment* Map}) structure that can answer counting queries on axis-parallel segments in Appendix \ref{app:segcount}.


\subsection{The \segtree{}}
\label{sec:segmenttree}
If the outer map is implemented by the augmented tree, the \emph{SegMap} becomes very similar to a segment tree (noted as \emph{SegTree}). Previous work has studied a similar data structure~\cite{Chaselle1984,AtallahG86,aggarwal1988parallel}. We note that their version can deal with more types of queries and problems, but we know of no implementation work of a similar data structure.
Our goal is to show how to apply the simple augmented map framework to model the segment query problem, and show an efficient and concise parallel implementation of it.

In segment trees, each subtree represents an open interval, and the union of all intervals in the same level span the whole interval (see Figure \ref{fig:illustration} (b) as an example). The intervals are separated by the endpoints of the input segments, and the two children partition the interval of the parent. Our version is slightly different from the classic segment trees in that we also use internal nodes to represent a point on the axis. For example, a tree node $u$ denoting an interval $(l,r)$ have its left child representing $(l, k(u))$, right child for $(k(u),r)$, and the single node $u$ itself, is the solid point at it key $k(u)$.
For each tree node, the \emph{SegTree} tracks the \opensets{} of its subtree's interval, which is exactly the augmented value of the sub-map rooted at $u$. The augmented value (the \opensets{}) of a node can be generated by combining the \opensets{} of its two children (and the entry in itself) using Equation \ref{eqn:segcombine}.

\para{Answering Queries more efficiently. }
The \mb{ALeft} on the outer tree of \emph{SegTree} is costly, as it would require $O(\log n)$ \mb{Union} and \mb{Different} on the way. Here we present a more efficient query algorithm making use of the tree structure, which is a variant of the algorithm in \cite{Chaselle1984,AtallahG86}. Besides the \opensets{}, in each node we store two helper sets (called the \emph{\symdiff{}}): the segments starting in its left half and going beyond the whole interval ($R(u_l)\backslash L(u_r)$ as in Equation \ref{eqn:segcombine}), and vice versa ($L(u_r)\backslash R(u_l)$). These \symdiff{} are the side-effect of computing the \opensets{}. Hence in implementations we just keep them with no extra work.
Suppose $x_q$ is unique to all the input endpoints. The query algorithm first searches $x_q$ outer tree. Let $u$ be the current visited tree node. Then $x_q$ falls in either the left or the right side of $k(u)$. WLOG, assume $x_q$ goes right. Then all segments starting in the left half and going beyond the range of $u$ should be reported because they must cover $x_q$. All such segments are in $R(\lc{u})\backslash L(\rc{u})$, which is in $u$'s \symdiff{}. The range search on y-coordinates will be done on this \symdiff{} tree structure. After that, the algorithm goes down to $u$'s right child to continue the search recursively.
The cost of returning all related segments is $O(k+\log^2 n)$ for output size $k$, and the cost of returning the count is $O(\log^2 n)$.
\hide{
\begin{SCfigure}[50][!ht]
  \centering
  \begin{tabular}{c}
\textbf{    (a) Range Query}\\
  \includegraphics[width=0.5\columnwidth]{figures/range_pic.pdf}\\
\vspace{.1in}
  \textbf{(b) Segment Query}\\
    \includegraphics[width=0.5\columnwidth]{figures/seg_pic.pdf}\\
\vspace{.1in}
\textbf{(c) Segment Counting Query}\\
      \includegraphics[width=0.5\columnwidth]{figures/count_pic.pdf}\\
  \end{tabular}
  \caption{An illustration of all data structures introduced in this paper. The input data are shown in the middle rectangle. We show the corresponding tree structures on the top, and the sweepline algorithm on the bottom. We use caption letters (e.g., $A$) to denote points, and a pair of endpoints (e.g., $AB$) to denote a segment. All the inner trees (the augmented values or the \sweepstructure{}s) are shown as sets (for the segment tree, it is a pair of sets) with elements listed in sorted order. }\label{fig:illustration}\vspace{-1em}
\end{SCfigure}}

\hide{
\begin{figure*}[!ht]
\begin{minipage}{0.66\columnwidth}
  \begin{tabular}{c}
\textbf{    (a) Range Query}\\
  \includegraphics[width=1\columnwidth]{figures/range_pic.pdf}\\
\end{tabular}
\end{minipage}
\begin{minipage}{0.66\columnwidth}
  \begin{tabular}{c}
\textbf{    (b) Segment Query}\\
  \includegraphics[width=1\columnwidth]{figures/seg_pic.pdf}\\
\end{tabular}
\end{minipage}
\begin{minipage}{0.66\columnwidth}
  \begin{tabular}{c}
\textbf{    (c) Rectangle Query}\\
  \includegraphics[width=1\columnwidth]{figures/rec_pic.pdf}\\
\end{tabular}
\end{minipage}\hfill
  \caption{An illustration of all data structures introduced in this paper. The input data are shown in the middle rectangle. We show the corresponding tree structures on the top, and the sweepline algorithm on the bottom. We use caption letters (e.g., $A$) to denote points, a pair of endpoints (e.g., $AB$) to denote a segment, and four points for rectangles. All the inner trees (the augmented values or the \sweepstructure{}s) are shown as sets (or a pair of sets) with elements listed in sorted order. }
  \label{fig:illustration}
\vspace{-1em}
\end{figure*}
}

\hide{
In this tree each node, or subtree, represents an interval, and we track all the segments that have one endpoint in this range. More specifically, for each range $s_i = [l_i, r_i]$, we keep two sets: $L(s_i)$ which stores all segments starting from outside on its left and ending in the interval (i.e., only the right endpoint in the interval), and $R(s_i)$ which stores all segments starting inside the range and ending outside on its right.

In this tree all nodes $v_i$ represent disjoint intervals $[l(v_i), r(v_i)]$, which is a partition of the whole range on the real axis. All the nodes are sorted by the key $k(v_i)$, which can be any value inside the range (i.e., as long as $l(v_i)\le k_i \le r(v_i)$). In addition, the augmented value of each node, which should represent some property in its subtree, is the interval from the leftmost key in its subtree to the rightmost one --- i.e., the union of all the disjoint intervals in its subtree.
}


\subsection{The Sweepline Algorithm}
\label{sec:segsweep}
If \sweepstructure{s} are used to represent the outer map, the algorithm becomes a sweepline algorithm \emph{SegSwp} (shown as $S_S$ in Table \ref{tab:allstructures}). We store at each endpoint $p$ the augmented value of the prefix of all points up to $p$. Because the corresponding interval is a prefix, the \leftopen{} is always empty. For simplicity we only keep the \rightopen{} as the \sweepstructure{}, which is all ``alive'' segments up to the current event point (a segment $(p_l,p_r)$ is alive at some point $x\in X$ iff $x(p_l)\le x \le x(p_r)$).

Sequentially, as the line sweeping through the plane, each left endpoint should trigger an insertion of its corresponding segment into the \sweepstructure{} while the right endpoints cause deletions. This is also what happens when a single point is plugged in as $u_r$ in Equation \ref{eqn:segcombine}. We use our parallel sweepline paradigm to parallelize this process.
In the batching step, we compute the augmented value of each block, which is the \opensets{} of the corresponding interval.
Basically, the \leftopen{} of an interval are segments with their right endpoints inside the interval, noted as $L$, and the \rightopen{} is those with left endpoints inside, noted as $R$.
In the sweeping step, the \sweepstructure{} is updated by the combine function $f_{\text{seg}}$, but only on the \rightopen{}, which is equivalent to first taking a \mb{Union} with $R$ and then a \mb{Difference} with $L$. Finally in the refining step, each left endpoint triggers an insertion and each right endpoint cause a deletion. This algorithm fits the sweepline abstraction in Theorem \ref{thm:sweeptheory}, so the corresponding bound holds.


\para{Answering Queries.} The \mb{ALeft} on the \sweepstructure{} is straightforward which is just a binary search of $x_q$ in the sorted list of x-coordinates.
In that \sweepstructure{} all segments are sorted by y-coordinates, and we search the query range of y-coordinates on that. 
In all, a query on reporting all intersecting segments costs $O(\log n +k)$ ($k$ is the output size), and a query on counting related segments costs $O(\log n)$.

\section{2D Rectangle Query}

Given a set of rectangles on the 2D planar, the rectangle query (also referred to as the \emph{orthogonal point enclosure query}) requires all rectangles containing a query point $p_q=(x_q, y_q)$ to be reported. Each rectangle $C=(p_l, p_r)$, where $p_l, p_r\in D$, is represented as its left-top and right-bottom vertices. We say the interval $[x(p_l), x(p_r)]$ and $[y(p_l), y(p_r)]$ are the \emph{x-interval} and \emph{y-interval} of $C$, respectively.

The rectangle query can be answered by a two-level map structure \emph{RecMap} ($G_M$ in Table \ref{tab:allstructures} and Figure \ref{fig:illustration} (c)), which is similar to the \emph{SegMap} as introduced in Section \ref{sec:segment}.
The outer map organizes all rectangles based on their x-intervals using a similar structure as the outer map of \emph{SegMap}.
The keys of the outer map are the x-coordinates of all endpoints of the input rectangles, and the values are the rectangles.
The augmented value of a (sub-)outer map is also the \emph{\opensets{}} as defined in \emph{SegMap}, which store the rectangles that partially overlap the x-range of this sub-map. The combine function is accordingly the same as the segment map.

Each inner map of the \emph{RecMap} organizes the rectangles based on their y-intervals.
All the y-intervals in an inner tree are organized in an \emph{interval tree} (the term \emph{interval tree} refers to different definitions in the literature. We use the definition in \cite{CLRS}).
The interval tree is an augmented tree (map) structure storing a set of 1D intervals sorted by the left endpoints, and augmented by the maximum right endpoint in the map. Previous work \cite{pam} has studied implementing interval trees using the augmented map interface. It can report all intervals crossing a point in time $O(\log n +k\log(n/k+1))$ for input size $n$ and output size $k$.

\emph{RecMap} answers the enclosure query of point $(x_q, y_q)$ using a similar algorithm as \emph{SegMap}. The query algorithm first finds all rectangles crossing $x_q$ by computing the \rightopen{} $R$ in the outer map up to $x_q$ using \mb{ALeft}, which is an interval tree.
The algorithm then select all rectangles in $R$ crossing $y_q$ by applying a list-all query on the interval tree.

Using interval trees as inner maps does not provide efficient interface for counting queries. We use the same inner map as in \emph{SegMap*} for counting queries. The corresponding map structure (\emph{RecMap*}) is presented in Appendix \ref{app:reccount}.

We use augmented trees for inner maps (the interval trees). We discuss two implementations of the outer map: the augmented tree (which yields a multi-level tree structure) and the \sweepstructure{}s (which yields a sweepline algorithm).

\subsection{The Multi-level Tree Structure}
\label{sec:rectangletree}
If the outer map is implemented by the augmented tree, \emph{RecMap} becomes a multi-level tree structure. The outer tree structure is similar to the \segtree{}, and we use the same trick of storing the \symdiff{} in the tree nodes to accelerate queries.
The cost of a list-all query is $O(k\log (n/k+1)+\log^2 n)$ for output size $k$.

\hide{
In this tree each node, or subtree, represents an interval, and we track all the segments that have one endpoint in this range. More specifically, for each range $s_i = [l_i, r_i]$, we keep two sets: $L(s_i)$ which stores all segments starting from outside on its left and ending in the interval (i.e., only the right endpoint in the interval), and $R(s_i)$ which stores all segments starting inside the range and ending outside on its right.

In this tree all nodes $v_i$ represent disjoint intervals $[l(v_i), r(v_i)]$, which is a partition of the whole range on the real axis. All the nodes are sorted by the key $k(v_i)$, which can be any value inside the range (i.e., as long as $l(v_i)\le k_i \le r(v_i)$). In addition, the augmented value of each node, which should represent some property in its subtree, is the interval from the leftmost key in its subtree to the rightmost one --- i.e., the union of all the disjoint intervals in its subtree.
}


\subsection{The Sweepline Algorithm}
\label{sec:recsweep}
If we use \sweepstructure{s} to represent the outer map, the algorithm becomes a sweepline algorithm ($G_S$ in Table \ref{tab:allstructures}). The skeleton of the sweepline algorithm is the same as \emph{SegSwp}--- the \sweepstructure{} at event point $x$ stores all ``alive'' rectangle at $x$. The combine function, fold function and update function are defined similar as in \emph{SegSwp}, but onto inner maps as interval trees.
This algorithm also fits the sweepline abstraction in Theorem \ref{thm:sweeptheory}, so the corresponding bound holds.

To answer the list-all query of point $(x_q,y_q)$, the algorithm first finds the \sweepstructure{} $t_q$ at $x_q$, and applies a list-all query on the interval tree $t_q$ at point $y_q$. The cost is $O(\log n+k\log(n/k+1))$ per query for output size $k$.


\section{Experiments}
\label{sec:exp}
We implement our algorithms for range, segment and rectangle queries using a parallel augmented map library (PAM)~\cite{pam}, which supports efficient functions for augmented maps using augmented trees.   We also implement the abstract parallel sweepline paradigm as described in Section \ref{sec:augsweep}.   Using PAM each of our data structures only need about 100 lines of code. Some examples are given in Appendix \ref{app:codeexample}. We plan to release our code on GitHub. More details about PAM and the implementation of the sweepline paradigm are given in Appendix~\ref{app:pam} and~\ref{app:sweepcode}. The PAM library supports persistence by path-copying. We run experiments on a 72-core Dell R930 with 4 x Intel(R)
Xeon(R) E7-8867 v4 (18 cores, 2.4GHz and 45MB L3 cache) with 1TB memory.  Each core is
2-way hyperthreaded giving 144 hyperthreads.  Our code was compiled
using the g++ 5.4.1 compiler which supports the Cilk Plus extensions.
We compile with \ctext{-O2}.

For range queries, we test \emph{RangeTree} in Section \ref{sec:rangetree} and \emph{RangeSwp} in Section \ref{sec:rangesweep}.
For segment queries, we test \emph{SegTree} in Section \ref{sec:segmenttree}, \emph{SegSweep} in Section \ref{sec:segsweep}, the counting versions \emph{SegTree*} and \emph{SegSwp*} in Appendix \ref{app:segcount}.
For rectangle queries, we test \emph{RecTree} in Section \ref{sec:rectangletree}, the \emph{RecSwp} in Section \ref{sec:recsweep} and the counting versions \emph{RecTree*} and \emph{RecSwp*} in Section \ref{app:reccount}. We use integer coordinates.
We test queries on both counting and list-all queries.   On list-all queries, since the cost is largely affected by the output size, we test a small and a large query window with average output size of less than $10$ and about $10^6$ respectively. We accumulate query time over $10^3$ large-window queries and over $10^6$ small window queries. For counting queries we accumulate $10^6$ queries picked uniformly at random. For parallel queries we process all queries in parallel using a parallel for-loop. The sequential algorithms tested in this paper are directly running the parallel algorithms on one core.
We use $n$ for the input size, $k$ the output size, $p$ the number of threads.  For the sweepline algorithms we set $b=p$, and do not apply the sweepline paradigm recursively to blocks.

We compare our sequential versions with two C++ libraries CGAL \cite{CGAL} and Boost \cite{boost}. CGAL provides the same range tree \cite{cgal-range} data structure as ours, and the segment tree \cite{cgal-range} in CGAL implements the 2D rectangle query.
Boost provides an implementation of R-trees, which can be used to answer range, segment and rectangle queries. CGAL and Boost only supports list-all queries. For Boost, we also parallelize the queries using OpenMP. CGAL uses some shared state in queries thus the queries cannot be parallelized trivially. We did not find comparable parallel implementations in C++, so we compare our parallel query performance with Boost, and also compare the parallel performance of multi-level tree structures and sweepline algorithms with each other.

In the rest of this section we show results for range queries and segment queries and comparisons across all tested structures.
We will show that our implementations are highly-parallelized. On 72 cores with 144 hyperthreads, our implementations achieve 32-70x speedup in construction, and 60-130x speedup in queries. The overall sequential performance (construction and query) of our implementations is comparable or outperforms existing implementations.
In Appendix~\ref{app:exp} we present more results and discussions on experimental results of the four data structures for counting queries, our dynamic updates on range trees, the scalability curves, and space (memory) consumption of all structures.

\subsection{2D Range Queries}
We test our implementation on \emph{RangeTree} and \emph{RangeSwp}, for both counting and list-all queries, sequentially and in parallel. We show the running time on $n=10^8$ in Table \ref{tab:rangetable}.
We also give the the scalability curve on $n=10^8$ points in Appendix \ref{app:exp}, Figure \ref{fig:expfigures} (a).
\hide{We compare our sequential version with two C++ libraries: the range tree in CGAL \cite{CGAL} and the R-tree in Boost \cite{boost}, which are both sequential and only supports list-all. We cannot find a comparable parallel implementation in C++, so for parallel performance, we just compare the two versions (\emph{RangeSwp} vs. \emph{RangeTree}) of our implementations.}

\para{Sequential Construction.} \emph{RangeTree} and \emph{RangeSwp} have similar performance and significantly outperform CGAL (2x faster), and is slightly faster than Boost (1.3-1.5x faster). Among all, \emph{RangeTree} is the fastest in construction. We guess the reason of the faster construction of our \emph{RangeTree} than CGAL is that their implementation makes two copies on data (once in merging and once to create tree nodes) while ours only copies the data once.

\para{Parallel Construction.} \emph{RangeTree} achieves a 63-fold speedup on $n=10^8$ and $p=144$. \emph{RangeSwp} has relatively worse parallel performance, which is a 33-fold speedup, and 2.3x slower than the \emph{RangeTree}. This is likely because of its worse theoretical depth ($\tilde{O}(\sqrt{n})$ vs. $O(\log^2 n)$). Another reason is that more threads means more blocks, introducing more overhead in batching and folding. As for \emph{RangeTree} not only the construction is highly-parallelized, but the combine function (\mb{Union}) is also parallel. Figure \ref{fig:expfigures}(a) shows that both \emph{RangeTree} and \emph{RangeSwp} scale up to 144 threads.

\para{Query Performance.} In counting queries, \emph{RangeSwp} shows the best performance in both theory and practice. On list-all queries, \emph{RangeSwp} is much slower than the other two range trees when the query window is large, but shows better performance for small windows. These results are consistent with their theoretical bounds. Boost R-tree is 1.5-26x slower than the our implementations, likely because of lack of worst-case theoretical guarantee in queries. The speedup numbers for queries are above 52 because they are embarrassingly parallel, and speedup numbers of our implementations are slightly higher than Boost.


\begin{table*}
\small
  \centering
    \begin{tabular}{c@{}c@{ }||rrr|rrr|rrr|rrr}
    \hline
       \multirow{2}{*}{$\boldsymbol{n}$}&\multirow{2}{*}{\textbf{Algorithm}}   &  \multicolumn{3}{c|}{\textbf{Build, s}} &        \multicolumn{3}{c|}{\textbf{Counting Query, $\boldsymbol{\mu}$s}} &      \multicolumn{3}{c|}{\textbf{List-all (small), $\boldsymbol{\mu}$s}} &      \multicolumn{3}{c}{\textbf{List-all (large), ms}}\\
\cline{3-14}
          && \multicolumn{1}{c}{\textbf{Seq.}} & \multicolumn{1}{c}{\textbf{Par.}} & \multicolumn{1}{c|}{\textbf{Spd.}} &  \textbf{Seq.} &\textbf{Par.} & \textbf{Spd.} &  \multicolumn{1}{c}{\textbf{Seq.}} & \multicolumn{1}{c}{\textbf{Par.}} & \multicolumn{1}{c|}{\textbf{Spd.}}&  \multicolumn{1}{c}{\textbf{Seq.}}&\textbf{Par.} & \textbf{Spd.}\\
    \hline\hline
\multirow{4}{*}{$\boldsymbol{10^8}$}&{\bf RangeSwp} &   243.89  & 7.30      & 33.4  &    12.74 &    0.15   &  86.7 &  11.44  &   0.13      &  85.4    &   213.27   & 1.97 & 108.4  \\
&{\bf RangeTree} &     200.59 &       3.16 & 63.5   &      61.01 &      0.75 &  81.1  &     17.07 &      0.21 & 80.5  &     44.72 &      0.69 & 65.2   \\
&{\bf Boost} &    315.34 &          - &          - &          - &          - &          - &    25.41 &          0.51 &          49.8&    1174.40 &          22.42 &          52.4 \\
&{\bf CGAL} &    525.94 &          - &          - &          - &          - &          - &    153.54 &          - &          - &    110.94 &          - &          - \\
    \hline
\multirow{3}{*}{$\boldsymbol{5\times 10^7}$}&{\bf SegSwp} &  254.49   &   7.20    &   35.3 &    6.78  &      0.09 & 75.3   &  6.18     &   0.08    &   77.2    &   255.72    & 2.65    &   96.5 \\
&{\bf SegTree} &  440.33   &  6.79   &   64.8 &  50.31   &  0.70  & 71.9    &  39.02  & 0.48   & 81.7   &  123.26    &  1.99    &   61.9    \\
&{\bf Boost} &  179.44   &  -   &   - &  -   &  -  & -    &  7421.30  & 113.09   & 65.6   &  998.20    &  23.21    &   43.0    \\
\hline
\multirow{2}{*}{$\boldsymbol{5\times 10^7}$}&{\bf SegSwp*} &  233.19   &  7.16   &  32.6  &   7.44  & 0.11  &67.6   &          - &          - &          - &          - &          - &          - \\
&{\bf SegTree*} &    202.01 &      3.21 &  63.0  &       33.58 &      0.40 & 83.8   &          - &          - &          - &          - &          - &          - \\
\hline
\multirow{3}{*}{$\boldsymbol{5\times 10^7}$}&{\bf RecSwp} &   241.51  & 6.76      &  35.7  &       - &     - & -   &   8.34   &  0.10           &          83.4 &          575.46 &          5.91 &          97.4 \\
&{\bf RecTree} &    390.98 &   6.23    & 62.8   &       - &     - &  - &43.57   &       0.58 &   75.1 &         382.26 &     5.35 &          71.4\\
&{\bf Boost} &    183.65 &      - & -   &       - &     - &  - & 52.22  &  0.94       &  55.6 &          706.50 &          11.10 &          63.6\\
$\boldsymbol{5\times 10^6}$&{\bf CGAL$^{[1]}$} &    398.44 &      - & -   &       - &     - &  - & 90.02  &  -       &   - &         4412.67 &          - &          -\\
\hline
\multirow{2}{*}{$\boldsymbol{5\times 10^7}$}&{\bf RecSwp*} & 585.18    &  12.37     & 47.32   &      6.56 &  0.05 & 126.1   &  -    &  -      &          - &          - &       - &          - \\
&{\bf RecTree*} & 778.28    &  11.34     & 68.63   &       39.75 &    0.35 &  113.6 &-   &       - &  - &          - &          - &          -\\
\hline
    \end{tabular}\vspace{-.1in}
    \caption{\textbf{The running time of all data structures} - ``Seq.'', ``Par.'' and ``Spd.'' refer to the sequential, parallel running time and the speedup. [1]: Result of CGAL is shown as on input size $5\times 10^6$. On $5\times 10^7$, CGAL did not finish in one hour.}
  \label{tab:rangetable}%
  \vspace{-.1in}
\end{table*}%
\hide{For query-all we also adjust window size of the query (see details in Section \ref{sec:expsetting}). Seq. and Par. represent sequential and parallel time respectively. We give the CGAL range tree (sequential) time for comparison.}

\subsection{2D Segment Query}
We show the running times on segment queries using \emph{SegSwp}, \emph{SegTree}, \emph{SegSwp*} and \emph{SegTree*},
on $n=5\times 10^7$ in Table \ref{tab:rangetable}.   We also show the parallel speedup in Figure \ref{fig:expfigures}(b). Note that for these structures on input size $n$ (number of segments), $2n$ points are processed in map. Thus we set input size to be $n=5\times 10^7$ for comparison with the maps for range queries.
\hide{We compare our sequential version with Boost R-tree. We did not find comparable parallel C++ implementations on the same problem so we just compare our implementations with each other.} We discuss the performance of \emph{SegTree*} and \emph{SegSwp*} in Appendix \ref{app:exp}.

\para{Sequential Construction.} Sequentially, Boost is 1.4x faster than \emph{SegSwp} and 2.4x than \emph{SegTree}. This is likely because of R-tree's simpler structure. However, Boost is 4-200x slower in sequential queries than our implementations.
\emph{SegTree} is the slowest in construction because it stores four sets (the \opensets{} and the \symdiff{}) in each node, and calls two \mb{Difference} and two \mb{Union} functions in each combine function.


\para{Parallel Construction.} For $n=5\times 10^7$ input segments, \emph{SegTree} is slightly faster than \emph{SegSwp} in parallel construction. Considering that \emph{SegTree} is about 1.7x slower than \emph{SegSwp} sequentially, the good parallel performance comes from its good scalability (64x speedup). The lack of parallelism of \emph{SegSwp} is of the same reason as \emph{RangeSwp}.

\para{Query Performance.} In the counting query and list-all query on small window size, \emph{SegSweep} is significantly faster than \emph{SegTree} as would be expected from its better theoretical bound.
As for list-all on large window size, although \emph{SegTree} and \emph{SegSwp} have similar theoretical cost (output size $k$ dominates the cost), \emph{SegTree} is faster than \emph{SegSwp} both sequentially and in parallel.
This might have to do with locality.  In the sweepline algorithms, the tree nodes even in one \sweepstructure{} were created at different times because of path-copying, and thus are not contiguous in memory, leading to bad cache performance.
Both \emph{SegSwp} and \emph{SegTree} have better query performance than Boost R-tree (8.7-1200x faster in parallel). Also, R-tree does not take the advantage of smaller query window.
Comparing the sequential query performance on large windows with on small windows, on outputting about $10^6x$ less points, \emph{SegTree} and \emph{SegSwp} are $3000$x and $40000$x faster respectively, while Boost R-tree is only 130x faster. Our implementations on small windows is not $10^6$x as faster as on large windows because on small windows, the $\log n$ or $\log^2 n$ term dominates the cost. This illustrates that the bad query performance of R-tree comes from lack of worst-case theoretical guarantee.
The query speedup of our implementations is over 61.


\subsection{2D Rectangle Query}
We show the running times on rectangle queries using \emph{RecSwp}, \emph{RecTree}, \emph{RecSwp*} and \emph{RecTree*},
on $n=5\times 10^7$ in Table \ref{tab:rangetable}.   We also show the parallel speedup in Figure \ref{fig:expfigures}(c). We discuss the performance of \emph{RecTree*} and \emph{RecSwp*} in Appendix \ref{app:exp}.

\para{Sequential Construction.} Sequentially, the three implementations have close performance as in the segment queries, in which Boost is faster in construction than the other two (1.6-2.1x), but is much slower in queries, and
\emph{RecTree} is slow in construction because of its more complicated structure. CGAL did not finish construction on $n=5\times 10^7$ in one hour, and thus we present the results on $n=5\times 10^6$. In this case, CGAL has a performance slightly worse than our implementations even though our input size is 10x larger.

\para{Parallel Construction.} The parallel performance is similar to the segment queries, in which \emph{RecTree} is slightly faster than \emph{RecSwp} because of good scalability (62x speedup).

\para{Query Performance.} In list-all queries on a small window size, \emph{RecSwp} is significantly faster than other implementations due to its better theoretical bound. Boost is 1.2-9x slower than our implementations when query windows are small, and is 1.2-2x slower when query windows are large, both sequentially and in parallel.
The query speedup of our implementations is over 71.

\subsection{Other experiments} We also give the results and discussions on the performance of the data structures for counting queries in Appendix \ref{app:countingexp}, the scalability in Appendix \ref{app:scale}, our update function for range trees in Appendix \ref{app:update} and space (memory) consumption in Appendix \ref{app:space}.

\subsection{Summary}
All results in Table \ref{tab:rangetable} are on $10^8$ elements, so we briefly compare all of them.
The sweepline algorithms usually have better construction time sequentially, but in parallel are slower than two-level tree structures.
This has to do with the better scalability of the two-level trees.
For the construction of two-level trees, with properly defined augmentation, the construction is automatically done by the augmented map constructor in PAM, which is highly-parallelized (polylog depth). For the sweepline algorithms, the parallelism comes from the blocking-and-batching process, with a $\tilde{O}(\sqrt{n})$ depth.
Most of the implementations have similar construction time. Sequentially \emph{SegTree} and \emph{RecTree} is much slower than the others, because they store more information in each node (four maps) and has a more complicated combine function. The speedup numbers of all sweepline algorithms are close at about 30-35x, and all two-level trees at about 62-68x.

Generally the sweepline algorithms are better in counting queries and small window queries (when the output size $k$ does not dominate the cost) because of better theoretical bound. On large window queries, the two-level tree structures generally are faster since theoretically the output size $k$ dominates the cost, and the two-level trees have better locality.

Comparing to CGAL and Boost, our implementations outperforms CGAL in both construction and queries. Overall, Boost R-tree can be about 2x faster than our algorithms in construction, but is always slower (1.6-1400x) in queries, likely because of its lack of worst-case theoretical guarantee.

\hide{
\begin{table}[!h]
\small
  \centering
\begin{tabular}{|c||r|r|r||r|r||r|r|r|r|}
\hline
           &   \multicolumn{ 3}{c||}{{\bf Build}} & \multicolumn{ 2}{c||}{{\bf Query-sum}} &            \multicolumn{ 4}{c|}{{\bf Query-all}} \\
\hline
           & {\bf Seq.} & {\bf Par.} & {\bf speedup} &    {$\boldsymbol{m}$} & {\bf Time} &    {$\boldsymbol{m}$} & {\bf Small} & {\bf Medium} & {\bf Large} \\
\hline
{\bf Sweepline} &    339.64 &    13.99 & 24.28&    $10^5$ &   0.67 &    $10^3$ &   0.14 &    11.69 &    120.79 \\
\hline
{\bf SegTree} &    575.84 &    10.43 & 55.23 &    $10^5$ &    5.32 &    $10^3$ &   0.11 &    4.71 &    42.71 \\
\hline
{\bf SegCntTree} &    279.27 &    7.02 & 39.81 &    $10^5$ &    3.90 &          - &          - &          - &          - \\
\hline
\end{tabular}
    \caption{Seg Query}
  \label{tab:segtable}%
\end{table}%
}

\hide{
\begin{table*}[!h]
  \centering
    \begin{tabular}{|c||r|r|r||r|r|r||r|r|}
    \hline
          & \multicolumn{3}{c||}{\textbf{Build}} &        \multicolumn{3}{c||}{\textbf{Query-sum}} &      \multicolumn{2}{c|}{\textbf{Query-all}} \\
    \hline
          & \multicolumn{1}{c|}{\textbf{$\bf p=1$}} & \multicolumn{1}{c|}{\textbf{$\bf p=144$}} & \multicolumn{1}{c||}{\textbf{speedup}} & \multicolumn{1}{c|}{\textbf{$\bf p=1$}} & \multicolumn{1}{c|}{\textbf{$\bf p=144$}} & \multicolumn{1}{c||}{\textbf{speedup}} & \multicolumn{1}{c|}{\textbf{small}} & \multicolumn{1}{c|}{\textbf{large}} \\
    \hline
    {\textbf{RangeSweep}} & 352.54 & 9.21 & 38.26 & 9.91 & 0.11 & 90.94 &  0.026 & 5.318  \\
    \hline
    {\textbf{RangeTree}} & 245.23 & 4.90 & 50.03 & 63.56 & 0.67 & 94.47 & 0.017 & 0.283  \\
    \hline
    \textbf{CGAL} & 514.36 &    -   &   -    &   -    & -      &     -  & 0.210 & 2.624 \\
    \hline
    \end{tabular}%
    \caption{Range Query}
  \label{tab:range}%
\end{table*}%
}

\section{Conclusion}
\label{sec:conclusion}
This paper we develop implementations of a broad set of parallel algorithms for range, segment and rectangle queries that are very much faster and simpler than the previous implementations as well as theoretically efficient.
We did this by using a the augmented map framework. Based on different representations (augmented trees and \sweepstructure{}s), we design both multi-level trees and sweepline algorithms addressing range, segment and rectangle queries. We implement all algorithms in these paper and test the performance both sequentially and in parallel. Experiments show that our algorithms achieve good speedup, and have good performance even running sequentially. The overall performance considering both construction and queries of our implementations outperforms existing sequential libraries such as CGAL and Boost.

\hide{
In this paper we study data structures for range queries and segment queries especially in the parallel setting. We propose a simple framework for these problems using the abstraction of augmented maps, as well as a parallel sweepline paradigm. By adapting the problems into such frameworks, we can design simple, efficient and parallel algorithms for these problems. We implement all algorithms in these paper and test the performance both sequentially and in parallel. Experiments show that our algorithm achieve good speedup, and are also good even running sequentially, and outperforms existing sequential implementations such as CGAL. } 
\balance
\newpage
\bibliographystyle{plain}
\bibliography{main}
\appendix
\begin{table*}[!t]
{\small
\begin{tabular}{r@{}r@{ }@{}c@{ }@{ }c@{ }@{ }c@{ }@{ }l@{ }@{}l@{ }@{ }l@{ }@{}l@{ }@{ }l@{ }@{ }l@{ }@{ }l@{ }@{ }l@{ }}
\hline
\multicolumn{13}{@{}l}{\textbf{\normalsize{* Segment Count Query:}}}\\
  \multicolumn{2}{@{}r@{}}{\bf(Inner Map)$\boldsymbol{S^*_I}$} &=& \textbf{AM}&(& $\boldsymbol{K}$: $Y$; &$\boldsymbol{\prec}$: $<_{Y}$;  &$\boldsymbol{V}$: $D\times D$; &$\boldsymbol{A}$: $\mathbb{Z}$; &$\boldsymbol{g}$: $(k,v)\mapsto 1$; &$\boldsymbol{f}$: $+_{\mathbb{Z}}$; &$\boldsymbol{I}$: $0$ &)\\
{\bf -}  &\bf SegMap* $\boldsymbol{S^*_M}$&=& \textbf{AM}&(&  $\boldsymbol{K}$: $X$; & $\boldsymbol{\prec}$: $<_{X}$; &$\boldsymbol{V}$: $Y\times Y$;&$\boldsymbol{A}$: $S^*_I$;&$\boldsymbol{g}$: $g^*_{\text{seg}}$&$\boldsymbol{f}$: $S^*_I$.union& $\boldsymbol{I}$: $\emptyset$ &\multicolumn{1}{@{}l@{}}{)}\\
\multicolumn{1}{l}{} & &\multicolumn{11}{@{}l}{$\boldsymbol{g^*_{\text{seg}}(x,(l,r))}=C_I$.build$\left(\{(l,1), (r,-1)\}\right)$} \\
 {\bf -}  &\bf SegSwp* $\boldsymbol{S^*_S}$&=& \textbf{PS}&(& $\boldsymbol{P}$: $D\times D$;&$\boldsymbol{\prec}$: $<_Y$;&$\boldsymbol{T}$: $C_I$;& $\boldsymbol{t_0}$: $\emptyset$;&$\boldsymbol{h}$: $h^*_{\text{seg}}$;& $\boldsymbol{\funca}$: $\rho^*_{\text{seg}}$;& $\boldsymbol{f}$: $C_I$.union &\multicolumn{1}{@{}l@{}}{)}\\
   \multicolumn{1}{l}{}   &\multicolumn{12}{@{}r}{$\boldsymbol{h^*_{\text{seg}}(t,(p_l,p_r))}=C_I$.union$(t,\{(y(p_l),1),(y(p_r),-1)\})$, \quad $\boldsymbol{\funca^*_{\text{seg}}}(\{(p_l,p_r)\})=C_I.$build$(\{(y(p_l),1),(y(p_r),-1)\})$ } \\
\hline
\multicolumn{13}{@{}l}{\textbf{\normalsize{* Rectangle Count Query:}}}\\
  \multicolumn{2}{@{}r@{}}{\bf(Inner Map)$\boldsymbol{G^*_I}$} &=& \multicolumn{10}{l}{$S^*_I$}\\
  \bf - &{\bf(RecMap*)$\boldsymbol{G^*_M}$} &=& \multicolumn{10}{l}{similar as $G_M$, but use $G^*_I$ as inner maps}\\
  \bf -&{\bf(RecSwp*)$\boldsymbol{G^*_S}$} &=& \multicolumn{10}{l}{similar as $G_S$, but use $G^*_I$ as \sweepstructure{s}}\\

\hide{
{\bf -}  &\bf RecMap* $\boldsymbol{G^*_M}$&=& \textbf{AM}&(& $\boldsymbol{K}$: $X$;& $\boldsymbol{\prec}$: $<_{X}$; &$\boldsymbol{V}$: $D\times D$;&$\boldsymbol{A}$: $G^*_I\times G^*_I$; & $\boldsymbol{g}$: $g^*_{\text{rec}}$ &$\boldsymbol{f}$: $f^{*}_{\text{rec}}$&$\boldsymbol{I}$: $(\emptyset,\emptyset)$&\multicolumn{1}{@{}l@{}}{)}\\
\multicolumn{1}{l}{}&&\multicolumn{11}{@{}l@{}}{$\boldsymbol{g^*_{\text{seg}}(k, (p_l, p_r))}$: $\begin{cases}
        (\emptyset, G_I \text{.build}(\{(y(p_l),1), (y(p_r),-1)\})), \text{when } k = x(p_l)\\
        (S_I\text{.build}(\{(y(p_l),1), (y(p_r),-1)\}), \emptyset), \text{when } k = x(p_r)\\
        \end{cases}$; \quad $f^*_{\text{rec}}=f_{\text{rec}}$;} \\
 {\bf -}  &\bf RecSwp* $\boldsymbol{G^*_S}$&=& \textbf{PS}&(& $\boldsymbol{P}$: $D\times D$;&$\boldsymbol{\prec}$: $<_X$;&$\boldsymbol{T}$: $G^*_I$;& $\boldsymbol{t_0}$: $\emptyset$;&$\boldsymbol{h}$: $h^*_{\text{rec}}$;& $\boldsymbol{\funca}$: $\rho^*_{\text{rec}}$;& $\boldsymbol{f}$: $f^*_{\text{seg}}$ &\multicolumn{1}{@{}l@{}}{)}\\
   \multicolumn{1}{l}{}   &\multicolumn{12}{@{}r}{$\boldsymbol{h^*_{\text{seg}}(t,(p_l,p_r))}=C_I$.union$(t,\{(y(p_l),1),(y(p_r),-1)\})$, \quad $\boldsymbol{\funca^*_{\text{cnt}}}(\{(p_l,p_r)\})=C_I.$build$(\{(y(p_l),1),(y(p_r),-1)\})$ } \\
   }
\hline
\end{tabular}
}
\caption{\textbf{Definitions of \emph{SegMap*} and \emph{RecMap*}} - $X$ and $Y$ are types of x- and y-coordinates. $D=X\times Y$ is the type of a point.}
\label{tab:appallstructures}\vspace{-.2in}
\end{table*}

\hide{
\begin{figure*}[!h!t]
  \centering\small
\begin{minipage}{0.33\columnwidth}
 \centerline{\includegraphics[width=1\columnwidth]{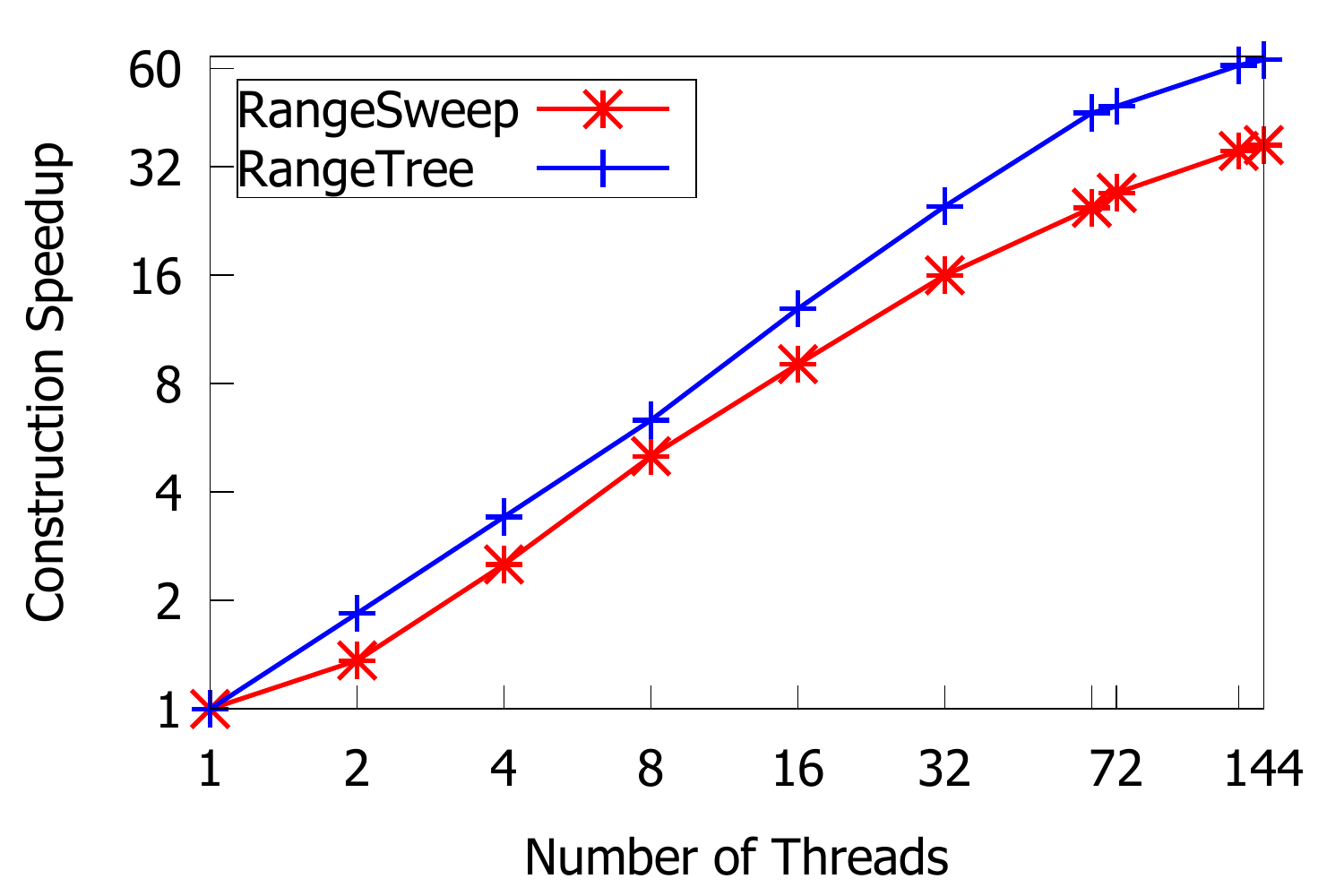}}
\end{minipage}
\begin{minipage}{0.15\columnwidth}
\centering{\textbf{(a).} \\ \textbf{Range Query},\\ $n=10^8$,\\ varying $p$}
\end{minipage}
\begin{minipage}{0.33\columnwidth}
 \centerline{\includegraphics[width=1\columnwidth]{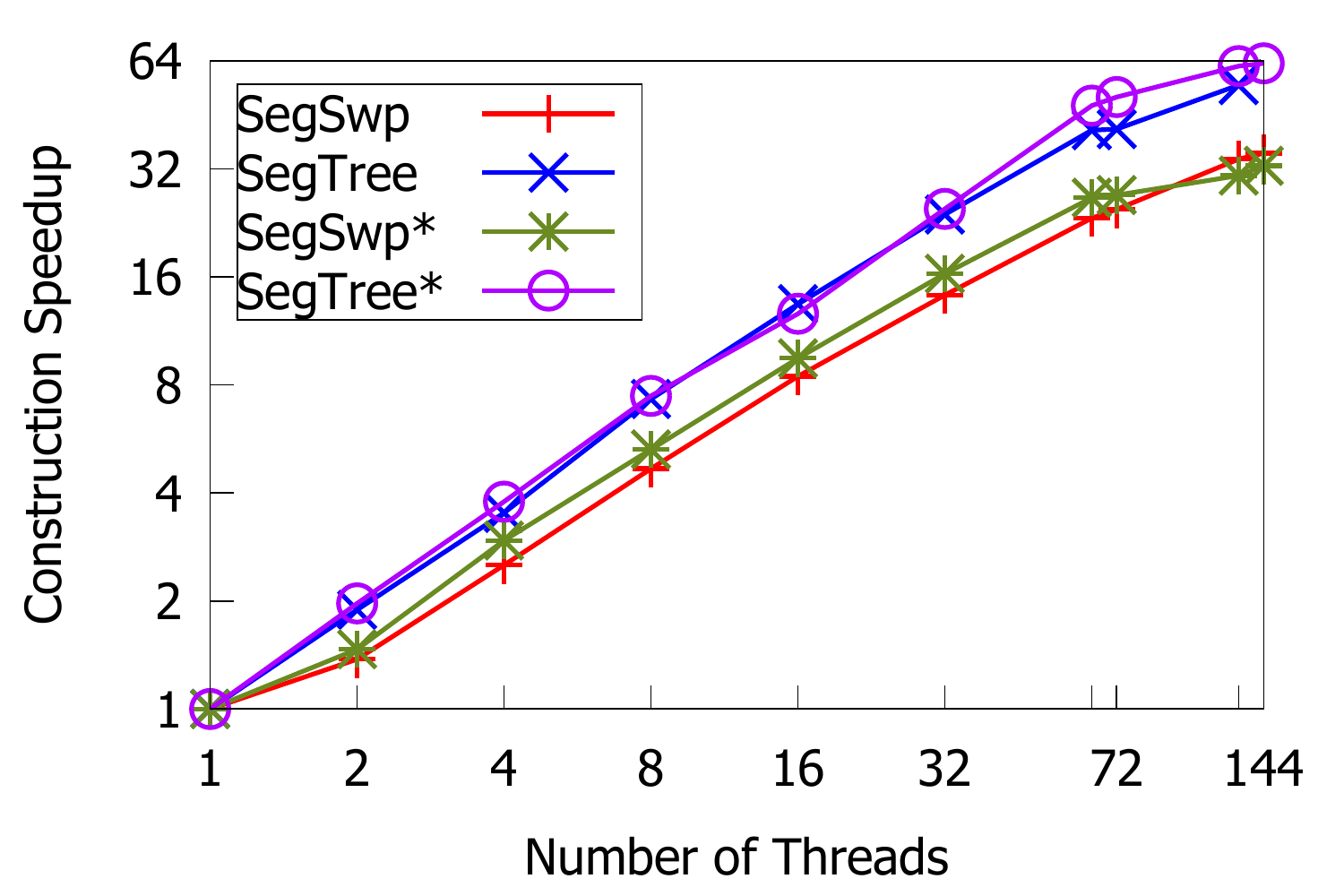}}
\end{minipage}
\begin{minipage}{0.15\columnwidth}
\centering{\textbf{(b).} \\ \textbf{Segment Query},\\ $n=10^8$,\\ varying $p$}
\end{minipage}\\
  \caption{\small{The speedup of building various data structures for range queries and segment queries.}}\label{fig:expfigures}
\end{figure*}}
\hide{
\section{PAM Functions and Interfaces}
\label{app:pam}
In the PAM library, an augmented map \texttt{augmented\_map<K,V,Aug>} is defined using C++ templates specifying the key type, value type and the necessary information about the augmented values. The interface of PAM can be summarized as follows:
\begin{itemize}\setlength\itemsep{-.15em}\vspace{-.25em}
\item \texttt{\textbf{K}} - The key type, which must support a comparison function($<$) that defines a total order over its elements.
\item \texttt{\textbf{V}} - The value type.
\item \texttt{\textbf{Aug}} - The augmentation type. It is a struct containing the following attributes and methods:
\begin{itemize}\setlength\itemsep{-.20em}\vspace{-.5em}
\item \texttt{Typename \textbf{aug\_t}} that is the augmented value type $A$.
\item Static method \texttt{\textbf{base}(K, V)} that defines the augmented base function $g : K \times V \mapsto A$.
\item Static method \texttt{\textbf{combine}(aug\_t, aug\_t)} that defines the augmented combine function $f : A \times A \mapsto A$.
\item Static method \texttt{\textbf{identity}()} that returns $a_{\emptyset}$.
\end{itemize}
\end{itemize}

We give a list of the functions as well as their mathematical definitions in Table \ref{tab:mapfunctions}. Besides the standard functions defined on maps, there are some functions designed specific for augmented maps. Some of the functions are the standard functions (e.g., \ctext{union}, \ctext{insert}) augmented with an operation $\sigma$ to combine values of entries with the same key.
In the reasonable scenarios when two entries with the same key should
appear simultaneously in the map, their values would be combined by the function $\sigma$.
For example, when the same key appears in both maps involved in a \texttt{union} or \texttt{intersection},
the key will be kept as is, but their values will be combined by $\sigma$ to be the new value in the result.
Same case happens when an entry is inserted into a map that already has the same key in it, or when we build a map from a sequence that has
multiple entries with the same key.
A common scenario where this is useful for example, is to keep a count for each key, and have
union and intersection sum the counts, insert add the counts, and
build keep the total counts of the same key in the sequence. There are also some functions designed specific for augmented maps. For example, the \texttt{aug\_range} function returns the augmented value of all entries within a key range,which is especially useful in answering related range queries.

The functions implemented in PAM are work-optimal and have low depth. Some functions along with their sequential and parallel cost are list in Table \ref{tab:costs}.

\begin{SCtable}
\small\centering
\begin{tabular}{>{\bf}l<{}|cc}
\hline
\textit{\textbf{Function}} &\textit{\textbf{Description}}  \\
\hline
\hline
\texttt{dom}$(M)$ & $\{k(e)\,:\,e\in M\}$\\
\hline
\texttt{find}$(M,k)$ (or $M[k]$)  & {$v$ \textbf{if}  $ (k,v) \in M$ \textbf{else} $\Box$}\\
\hline
\texttt{delete}$(M,k)$ & $\{ (k',v) \in M~|~k' \neq k \}$\\
\hline
\multirow{2}{*}{\texttt{insert$(M,e,\sigma)$}}&\multicolumn{1}{l}{Argument $\sigma:V\times V\rightarrow V$}\\
\cline{2-2}
& \texttt{delete}$(M,k(e)) \cup \{ (k(e), \sigma(v(e),M[k(e)])) \}$\\
\hline
{\texttt{intersect}}& \multicolumn{1}{l}{Argument $\sigma:V\times V\rightarrow V$}\\
\cline{2-2}
$(M_1, M_2, \sigma)$&
$\{ (k, \sigma(M_1[k],M_2[k]))|\,k \in \texttt{dom}(M_1)\cap \texttt{dom}(M_2)\}$\\
\hline
\texttt{diff}$(M_1,M_2)$ & $\{e \in M_1~|~ k(e) \not\in \texttt{dom}(M_2)\}$\\
\hline
\multirow{2}{*}{\texttt{union$(M_1, M_2, \sigma)$}}& \multicolumn{1}{l}{Argument $\sigma:V\times V\rightarrow V$} \\
\cline{2-2}
\multirow{2}{*}{}&\texttt{diff}$(M_1,M_2)~\cup$ \texttt{diff}$(M_2,M_1)\cup$ \texttt{intersect}$(M_1,M_2,\sigma)$ \\
\hline
\multirow{2}{*}{\texttt{build}$(s, \sigma)$} & \multicolumn{1}{l}{Arguments $\sigma:V\times V\rightarrow V, s=\langle e_1, e_2,\dots, e_n \rangle$}\\
\cline{2-2}
& $\{ (k', \sigma(v'_{i_1}, v'_{i_2}, \dots)) \,|\, \exists e=(k',v'_{i_j})\in s, v'_{i_1}<v'_{i_2}<\dots\}$\\
\hline
\texttt{rank}$(M,k)$ & $|\{(k',v)\in M~|~k' < k\}|$\\
\hline
\texttt{range}$(M,k_1,k_2)$ & $\{e \in M\,|\,k_1<k(e)<k_2\}$ \\
\hline
\texttt{aug\_val}$(M)$  &$\augvalueone{M}$ \\
\hline
\texttt{aug\_l}$(M,k)$  &$\augvalueone{M'} \,:\, M'=\{e'\,|\, e'\in M, k(e')<k\}$ \\
\hline
\texttt{aug\_r}$(M,k)$  &$\augvalueone{M'} \,:\, M'=\{e'\,|\, e'\in M, k(e')>k\}$ \\
\hline
\texttt{aug\_range$(M,k_1,k_2)$} &$\augvalueone{M'} \,:\, M'=\{e'\,|\,e'\in M,k_1<k(e')<k_2\}$ \\
\hline
\texttt{aug\_find}$(M,a)$  &$\texttt{last}(\{e\in M \,|\, \texttt{aug\_l}(M,k(e))\le a\})$ \\
\hline
\end{tabular}
\caption{The core functions on ordered maps. Throughout the table we assume $k,k_1,k_2, k'\in K$, $v,v_1,v_2\in V$ and $e\in K\times V$, $M, M_1, M_2$ are ordered maps. $s$ is a sequence.  $\Box$ represents an empty element.}
    \label{tab:mapfunctions}
\end{SCtable}
}

\section{Data Structures for Segment Counting Queries}
\label{app:segcount}
In this section, we present a simple two-level augmented map \emph{SegMap*} structure to answer segment count queries (see the \emph{Segment Count Query} in Table \ref{tab:appallstructures} and Figure \ref{fig:illustration} (d)). This map structure can only deal with axis-parallel input segments.
For each input segment $(p_l,p_r)$, we suppose $x(p_l)=x(p_r)$, and $y(p_l)<y(p_r)$.
We organize the x-coordinates in the outer map, and deal with y-coordinates in the inner trees. We first look at the inner map. For a set of 1D segments, a feasible solution to count the segments across some query point $x_q$ is to organize all end points in sorted order, and assign signed flags to them as values: left endpoints with $1$, and right endpoints with $-1$. Then the prefix sum of the values up to $x_q$ is the number of alive segments.
To efficiently query the prefix sum we can organize all endpoints as keys in an augmented map, with values being the signed flags, and augmented values adding values. We call this map the \countmap{} of the segments.

To extend it to 2D scenario, we use a similar outer map as range query problem. On this level, the x-coordinates are keys, the segments are values, and the augmented value is the \countmap{} on y-coordinates of all segments in the outer map.
The combine function is \mb{Union} on the \countmap{}s. However, different from range maps, here each tree node represents \emph{two} endpoints of that segment, with signed flags $1$ (left) and $-1$ (right) respectively, leading to a different base function ($g^*_{\text{seg}}$).

We maintain the inner maps using augmented trees. By using augmented trees and \sweepstructure{s} as outer maps, we can define a two-level tree structure and a sweepline algorithm for this problem respectively.
Each counting query on the \countmap{} of size $m$ can be done in time $O(\log m)$. In all, the rectangle counting query cost time $O(\log^2 n)$ using the two-level tree structure \emph{SegTree*}, and $O(\log n)$ time using the sweepline algorithm \emph{SegSwp*}.

We present corresponding definition and illustration on both the multi-level tree structure and the sweepline algorithm in Table \ref{tab:allstructures} and Figure \ref{fig:illustration} (d).

\section{Data Structures for Rectangle Counting Queries}
\label{app:reccount}
In this section, we extend the \emph{RecMap} structure to \emph{RecMap*} for supporting fast counting queries. We use the exactly outer map as \emph{RecMap}, but use base and combine functions on the corresponding inner maps. The inner map, however, is the same map as the \emph{\countmap} in \emph{SegMap*} ($S^*_I$ in Table \ref{tab:appallstructures}).
Then in queries, the algorithm will find all related inner maps, which are \countmap{s} storing all y-intervals of related rectangles. To compute the count of all the y-intervals crossing the query point $y_q$, the query algorithm simply apply an \mb{ALeft} on the \countmap{s}.

We maintain the inner maps using augmented trees. By using augmented trees and \sweepstructure{s} as outer maps, we can define a two-level tree structure and a sweepline algorithm for this problem respectively.
In all, the rectangle counting query cost time $O(\log^2 n)$ using the two-level tree structure \emph{RecTree*}, and $O(\log n)$ time using the sweepline algorithm \emph{RecSwp*}.

We present corresponding definition and illustration on both the multi-level tree structure and the sweepline algorithm in Table \ref{tab:allstructures}. The outer representation of \emph{RecMap*} is of the same format as \emph{RecMap} as shown in Figure \ref{fig:illustration} (c).

\section{Extend RangeSwp to Report All Points}
\label{app:rangesweeplist}
In the main body we have mentioned that by using a different augmentation, we can adjust the sweepline algorithm for range query to report all queried points.
It is similar to \emph{RangeSwp}, but instead of the count, the augmented value is the maximum x-coordinate among all points in the map. To answer queries we first narrow the range to the points in the inner map $t_R$ by just searching $x_R$. In this case, $t_R$ is an augmented tree structure.
Then all queried points are those in $t_R$ with x-coordinates larger than $x_L$ and y-coordinate in $[y_L,y_R]$. We still conduct a standard range query in $[y_L, y_R]$ on $t_R$, but adapt an optimization that if the augmented value of a subtree node is less than $x_L$, the whole subtree is discarded.
Otherwise, at least part of the points in the tree would be relevant and we recursively deal with its two subtrees.
\hide{
\begin{algorithm}[!th]
\caption{\func{ReportAll} ($t$,$x_L$, $y_L$, $y_R$)}
\label{algo:rangelistquery}
\KwIn{Current tree node $t$, the lower bound of the x-coordinate of the query window $x_L$, the y-range $[y_L,y_R]$}
\KwOut{Return all points in the subtree rooted at $r$ that have x-coordinate no less than $x_L$ and y-coordinate in range $[y_L,y_R]$}
\vspace{0.5em}
    \While{$r$ is not empty}{
      \If{$y(t)<y_L$} {$t\gets \rc{t}$; continue}
      \If{$y(t)>y_L$} {$t\gets \lc{t}$; continue}
    }
    $A\gets\emptyset$\\
	\lIf {($x(t)\ge x_L$)} {$A\gets A+e(t)$}
	$p \gets \lc{t}$\\
	\While {$p$} {
		\lIf {$y(p)<y_L$} {$p \gets \rc{p}$; continue}
		\lIf {$x(p)>x_L$} {$A\gets A+e(p)$}
		\func{FilterX}($\rc{p}$, $x_L$, $A$);
		$p\gets\lc{p}$
	}
	$p \gets \rc{t}$\\
	\While {$p$} {
		\lIf {$y(p)>y_R$} {$p \gets \lc{p}$; continue}
		\lIf {$x(p)>x_L$} {$A \gets A+e(p)$}
		\func{FilterX}($\lc{p}$, $x_L$, $A$);
		$p\gets\lc{p}$
	}
\Return {$A$}\\
\medskip
\SetKwProg{myfunc}{Function}{}{}
    \myfunc{\bf \func{FilterX}($t$, $x$, $A$)} {
      \lIf {$\mathcal{A}(t)<x$} {\Return}
      \lIf {$x(r)>x$}{$A\gets A+e(t)$}\label{line:visitmore}
      \func{FilterX}($\lc{t}$, $x$, $A$); \\
      \func{FilterX}($\rc{t}$, $x$, $A$);
    }
\end{algorithm}}

Now we analyze the cost of the query algorithm. Assume that the output size is $k$.
The total cost is asymptotically the number of tree nodes the algorithm visits, which is asymptotically the number of all reported points and their ancestors. For $k$ nodes in a balanced tree of size $n$, the number of all its ancestors is equivalent to all the nodes visited by the \mb{Union} function based on split-join model~\cite{join} when merging this tree with a set of the $k$ nodes. When using AVL trees, red-black trees, weight-balanced trees or treaps, the cost of the \mb{Union} function is $O(k\log (n/k+1))$. Detailed proof for the complexity of the \mb{Union} function can be found in~\cite{join}.

\section{Proof for Corollary \ref{coro:depth}}
\label{app:rangesweepdepth}
\begin{proof}
To reduce the depth of the parallel sweepline paradigm mentioned in Section \ref{sec:augsweep}, we adopt the same algorithm as introduced in Theorem \ref{thm:sweeptheory}, but in the last refining step, repeatedly apply the same algorithm on each block.
If we repeat for a $c$ of rounds, for the $i$-th round, the work would be the same as splitting the total list into $k^i$ blocks. Hence the work is still $O(n\log n)$ every round. After $c$ rounds the total work is $O(cn\log n)$.

For depth, notice that the first step costs logarithmic depth, and the second step, after $c$ iterations, in total, requires depth $\tilde{O}(cb)$ depth. The final refining step, as the size of each block is getting smaller and smaller, the cost of each block is at most $O(\frac{n}{b^i}\log n)$ in the $i$-th iteration. In total, the depth is $\tilde{O}\left(cb+\frac{n}{b^c}\right)$, which, when $b=c^{\frac{c}{c+1}}n^{\frac{1}{c+1}}$, is $\tilde{O}(n^{1/(c+1)})$. Let $\epsilon = 1/(c+1)$, which can be arbitrary small by adjusting the value of $c$, we can get the bound in Corollary \ref{coro:depth}.
\end{proof}

Specially, when $c=\log n$, the depth will be reduced to polylogarithmic, and the total work is accordingly $O(n\log^2 n)$. This is equivalent to applying a recursive algorithm (similar to the divide-and-conquer algorithm of the prefix-sum problem). Although the depth can be poly-logarithmic, it is not work-efficient any more. If we set $c$ to some given constant, the work and depth of this algorithm are $O(n\log n)$ and $O(n^\epsilon)$ respectively.

\section{Dynamic Update on Range Trees Using Augmented Map Interface}
\label{app:wbfast}
The tree-based augmented map
interface supports insertions and deletions (implementing the
appropriate rotations).  This can be used to insert and delete on the augmented tree interface.
However, by default this requires updating the augmented
values from the leaf to the root, for a total of $O(n)$ work.
Generally, if augmented trees are used to support augmented maps, the insertion function will re-compute the augmented values of all the nodes on the insertion path, because inserting an entry in the middle of a map could completely change the augmented value.
In the range tree, the cost is $O(n)$ per update because the combine function (\mb{Union}) has about linear cost.
To
avoid this we implemented a version of ``lazy'' insertion/deletion that applies when the
combine function is commutative.  Instead of recomputing the augmented
values it simply adds itself to (or removes itself from) the augmented values along the
path using $f$ and $g$. This is similar to the standard range tree update algorithm~\cite{lueker1978data}.

The amortized cost per update is $O(\log^2 n)$ if the tree is weight-balanced.
Here we take the insertion as an example, but similar methodology can be applied to any mix of insertion and deletion sequences (to make deletions work, one may need to define the inverse function $f^{-1}$ of the combine function).
Intuitively, for any subtree of size $m$, imbalance occur at least every $\Theta(m)$ updates, each cost $O(m)$ to rebalance. Hence the amortized cost of rotations per level is $O(1)$, and thus the for a single update, it is $O(\log n)$ (sum across all levels). Directly inserting the entry into all inner trees on the insertion path causes $O(\log n)$ insertions to inner trees, each cost $O(\log n)$. In all the amortized cost is $O(\log^2 n)$ per update.
\hide{
This can be directly supported by the augmented map interface (and PAM). When the augmented map is supported by the augmented tree, and when the combine function is commutative, a ``lazy'' insertion function can be applied. First, at each node on the insertion path, we call $f$ on the original augmented value and the new element to be inserted. The correctness is guaranteed by the commutativity of $f$. When imbalance occurs, we re-compute the augmented values of the related nodes. Since the \mb{Union} function is accumulative, we can use this lazy insert function and get the same asymptotical bound as the standard range tree update algorithm.}

Similar idea of updating multi-level trees in (amortized) poly-logarithmic time can be applied to \emph{SegTree*}, \emph{RecTree} and \emph{RecTree*}. For \emph{SegTree}, the combine function is not communicative, and thus update may be more involved than simply using the interface of lazy-insert function.

\hide{
\section{Answering Queries More Efficiently Using a Segment Tree}
\label{app:segquery}
As mentioned in the main body, the \mb{ALeft} on the augmented tree structure $S_O$ is costly ($O(n\log n)$ in the worst case), as it would require $O(\log n)$ \mb{Union} and \mb{Different} on the way. Here we present a more efficient query algorithm making use of the tree structure, which is a variant of the algorithm in \cite{Chaselle1984,AtallahG86}. Besides the \opensets{}, in each node we store two helper sets (called the \emph{\symdiff{}}): the segments starting in its left half and going beyond the whole interval ($R(u_l)\backslash L(u_r)$ as in Equation \ref{eqn:segcombine}), and vice versa ($L(u_r)\backslash R(u_l)$). These \symdiff{} are the side-effect of computing the \opensets{}. Hence in implementations we just keep them with no extra work.

We start with the 1D version, which only queries all segments across some x-coordinate $x_q$. We assume that $x_q$ is unique to all the input endpoints. We search $x_q$ in the tree and add related segments into the return list on the way. Denote the current tree node being visited as $u$. Then $x_q$ falls in either the left or the right side of $k(u)$. WLOG, we assume $x_q$ goes right. Then all segments starting in the left half and going beyond the range of $u$ should be reported because they must cover $x_q$. All such segments are in $R(\lc{u})\backslash L(\rc{u})$, which is in $u$'s \symdiff{}. We then go down to the right to continue the search recursively. Taking y-coordinates into consideration, we just search the corresponding y-range in the \symdiff{} related to the x-search. The cost of returning all related segments is $O(k+\log^2 n)$, and the cost of returning the count is $O(\log^2 n)$.}

\section{More Experimental results}
\label{app:exp}
\begin{figure*}[!t]
  \centering\small
  \begin{tabular}{ccc}
    \includegraphics[width=0.33\columnwidth]{figures/range-speed.pdf} &
    \includegraphics[width=0.33\columnwidth]{figures/seg-speed.pdf}&
    \includegraphics[width=0.33\columnwidth]{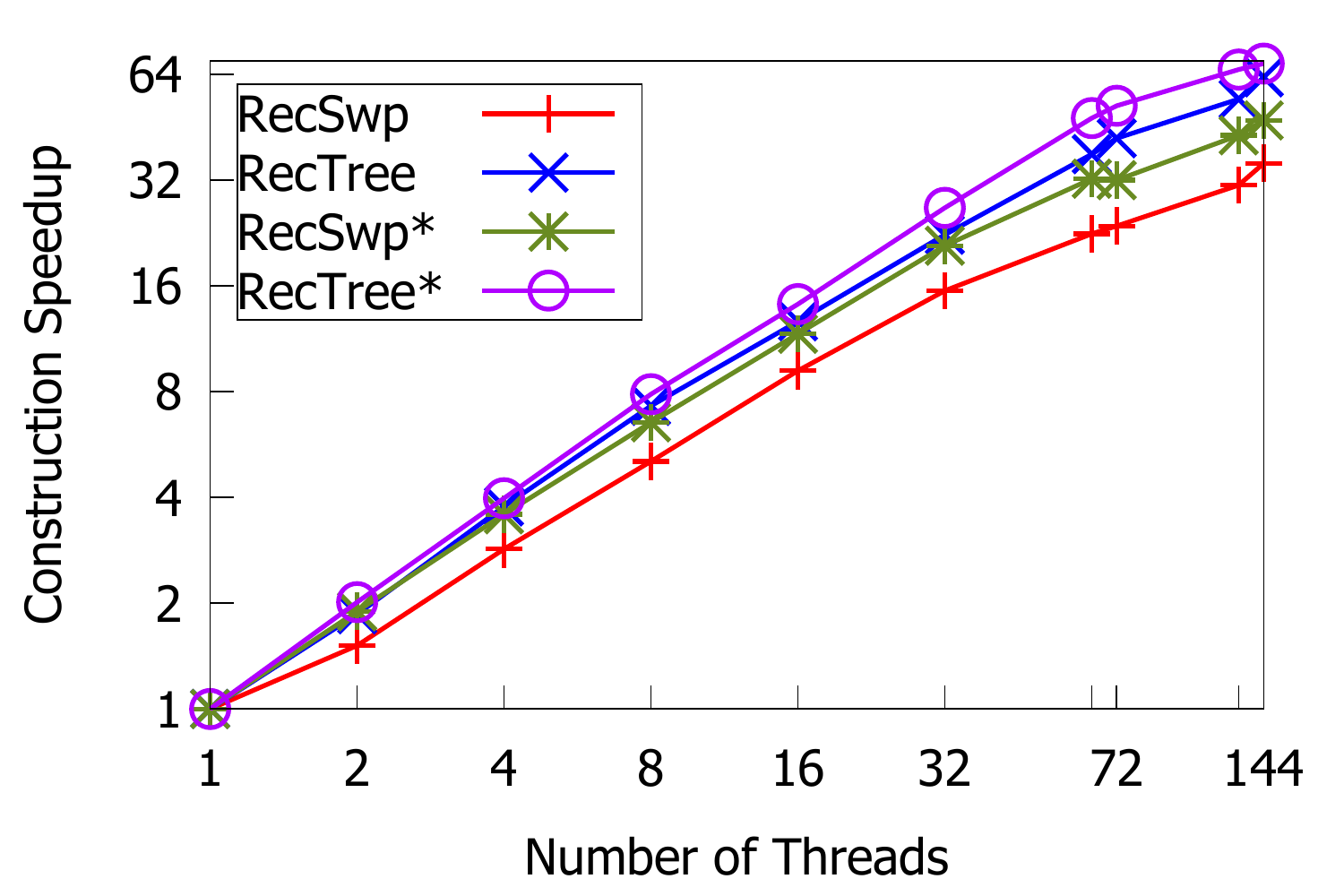}\\
    \textbf{(a).} \textbf{Range Query} &
    \textbf{(b).} \textbf{Segment Query} &
    \textbf{(c).} \textbf{Rectangle Query} \\
  \end{tabular}
  \caption{\small{The speedup of building various data structures for range and segment queries ($n=10^8$). }}\label{fig:expfigures}
\end{figure*}

\subsection{Discussion on the Data Structures for Counting Queries}
\label{app:countingexp}
We list the performance of the four data structures for fast answering counting queries in Table \ref{tab:allexpcost}. \emph{SegTree*} and \emph{SegSwp*} both have faster construction time than \emph{SegTree} and \emph{SegSwp} probably because they have simpler structure and less functionality (cannot answer list-all queries). Another reason is that \emph{SegTree*} and \emph{SegSwp*} both have smaller outer map sizes ($5\times 10^7$ vs. $10^8$), thus requiring fewer invocations of combine functions on the top level. \emph{RecSwp*} and \emph{RecTree*}, however, is about 2x slower than \emph{RecSwp} and \emph{RecTree}. This is because they have twice as large the inner tree sizes---an inner tree of a \emph{RecMap} is an interval tree storing each rectangle once as its y-interval, while an inner tree of a \emph{RecMap*} is a \countmap{} storing each rectangle twice as the two endpoints of its y-interval.

Overall, the results of these four data structures consists with the other data structures. In constructions, the sweepline algorithms have better sequential performance, but the two-level tree structures have better speedup and parallel performance. In counting queries, the sweepline algorithms are always much faster than the two-level tree structures because of their better theoretical bound.

\subsection{Scalability Discussion}
\label{app:scale}
In Figure \ref{fig:expfigures} we present the construction speedup numbers across different number of working threads for all data structures. The construction of all these data structures scale up to 144 threads. Generally, the speedup of two-level trees (about 60-70x on 144 threads) is better than the sweepline algorithms (about 30-40x on 144 threads). This is likely because of the theoretical depth ($\tilde{O}(\sqrt{n})$ vs. $O(\log^2 n)$). Another reason is that more threads means more blocks, introducing more overhead in batching and folding. As for two-level trees not only the construction is highly-parallelized, but the combine functions (\mb{Union} and \mb{Difference}) are also parallel.

\subsection{Experimental Results on Dynamic Range Trees}
\label{app:update}
We use the lazy-insert function (assuming a commutativity combine function) in PAM to support the insertion on range trees and test the performance. We first build an initial tree of size using our construction algorithm, and then conduct a sequence of insertions on this tree. We compare it with the plain insertion function (denoted as \mb{AugInsert}) in PAM which is general for the augmented tree (re-call the combine function on every node on the insertion path). We show the results in Table \ref{tab:insert}. Because the combine function takes linear time, each \mb{AugInsert} function costs about 40s, meaning that even 5 insertions may cost as expensive as re-build the whole tree. This function can be parallelized, with speedup at about 75x. The parallelism comes from the parallel combine function \mb{Union}. The \mb{LazyInsert} function is much more efficient and can finish $10^5$ insertions in 12s sequentially. Running in parallel does not really help in this case because the the combine function (\mb{Union}) is very rarely called, and even when it is called, it would be on very small tree size. When the size increases to $10^6$, the cost is also greater than rebuilding the whole tree. This means that in practice, if the insertions come in large bulks, rebuilding the tree (even sequentially) is often more efficient than inserting each point one by one into the tree. When there are only a small number of insertions coming in streams, \mb{LazyInsert} is reasonable efficient.

\begin{table}
  \centering
  \begin{tabular}{c|c|r|r|r}
    \hline
    \textbf{Algorithm} & $\boldsymbol{m}$ & \textbf{Seq.} (s) & \textbf{Par.} (s) & \textbf{Spd.} \\
    \hline
    \mb{AugInsert} & 10 & 406.079 & 5.366 & 75.670 \\
    \mb{LazyInsert} & $10^5$ & 12.506 & - & - \\
    \hline
  \end{tabular}
  \caption{\textbf{The performance of insertions on range trees using the lazy-insert function in PAM.} - ``Seq.'', ``Par.'' and ``Spd.'' mean the sequential running time, parallel running time and the speedup number respectively. }\label{tab:insert}
\end{table}

\subsection{Experimental Results on Space Consumption}
\label{app:space}
In Table \ref{tab:space}, we report the space consumption using our data structures of range and segment queries as examples to show the space-efficiency of our implementations. We note that for rectangle queries, the outer map structure is similar to segment queries, and thus can be estimated roughly using the results of segment queries. We store in each tree node a key, a value, an augmented value, the subtree size, two pointers to its left and right children and a reference counter for efficient garbage collection. Each inner tree is represented using a root pointer.

For all of them we estimate the theoretical number of nodes needed and show them in the table. The theoretical space cost is $O(n\log n)$ for all of them.
For \emph{SegTree} we use $2n\log n$ to estimate the number of inner tree nodes, and for the rest of them we simply use $n\log n$. This is because in a \emph{SegTree}, there are 4 inner trees stored in each of the outer tree node, and in the worst case, a segment can appear in at most two of them (one in the \opensets{} and the other in \symdiff{}). We compute the ratio as the actual used inner tree nodes divided by the theoretical number of inner nodes. All results in Table \ref{tab:space} are from experiments on all 144 threads.

As shown in the table, the two multi-level trees have ratio less than 100\%. This saving is mostly from the path-copying for supporting the persistence. In other words, in the process of combining the inner trees, some small pieces are preserved, and are shared among multiple inner trees. This phenomenon should be more significant when the input distribution is more skewed. In our case, because of our input is selected uniformly at random, the saving ratio is about 10\%-15\%. One special case is the \emph{SegTree}, where the ratio is only about 50\%. This is because even though theoretically in the worst case, each segment can appear in the augmented values of $O(\log n)$ outer tree nodes (one per level), in most of the cases they cancel out in the combine function. As a result, the inner tree sizes can be very small especially when the segments are short. As shown in the table, the actual number of required inner tree nodes is only about a half of the worst case, when the input endpoints are uniformly distributed.

For the sweepline algorithms, the actual used nodes are often slightly more than estimated. This is because in the parallel version of our sweepline paradigm, the trees at the beginning of each block are built separately. In the batching step, $O(n\log b)$ new tree nodes are created because of the $b$ \mb{Union} functions. In the sweeping step, $n\log n+O(n)$ new nodes are created. Because of the fold-and-sweep parallel sweepline paradigm we are using, we waste some space in the second step, when constructing the \sweepstructure{}s at the beginning of each block. As a very rough estimation, we waste about $n\log b$ (off by a small constant) nodes. In our experiment, $b=144$, $\log b\approx 7.2$, $\log n\approx 26$, which means that we may have a factor of $\log n/\log b \approx 27\%$ waste of tree nodes. This roughly matches our result of \emph{RangeSwp}. For \emph{SegSwp}, the nodes are inserted and then deleted at some point, and thus the size of the \sweepstructure{}s can be small for most of the time. In this case the wasted number of inner tree nodes is much fewer, which is only about 17\%.

In all, all of the tested data structures on range and segment tests use less than than $1.5n\log n$ tree nodes. Even the largest of them only cost 130G memory for input size $10^8$, which includes all costs of storing keys and values, as well as pointers and other information in tree nodes.

\begin{table*}
\begin{tabular}{l||r|r|r|r|r|r|r}
\hline
      & \multicolumn{1}{@{ }c@{ }|}{\textbf{\# of}} & \multicolumn{1}{@{ }c@{ }|}{\textbf{Size of an}} & \multicolumn{1}{@{ }c@{ }|}{\textbf{\# of}} & \multicolumn{1}{@{ }c@{ }|}{\textbf{Size of an}} & \multicolumn{1}{@{ }c@{ }|}{\textbf{Theoretical \#}} &       &  \\
      & \multicolumn{1}{@{ }c@{ }|}{\textbf{Outer Nodes}} & \multicolumn{1}{@{}
      c@{ }|}{\textbf{Outer Node}} & \multicolumn{1}{@{ }c@{ }|}{\textbf{Inner Nodes}} & \multicolumn{1}{@{ }c@{}
      |}{\textbf{Inner Node}} & \multicolumn{1}{@{ }c@{ }|}{\textbf{of Inner Nodes}} & \multicolumn{1}{@{ }c@{}
      |}{\textbf{Ratio}} & \multicolumn{1}{@{ }c@{}
      }{\textbf{Used Space}} \\
      & \multicolumn{1}{@{ }c@{ }|}{\textbf{($\times 10^6$)}} & \multicolumn{1}{@{ }c@{ }|}{\textbf{(Bytes)}} & \multicolumn{1}{@{ }c@{ }|}{\textbf{($\times 10^9$)}} & \multicolumn{1}{@{ }c@{ }|}{\textbf{(Bytes)}} & \multicolumn{1}{@{}
      c@{ }|}{\textbf{($\times 10^9$)}} & \multicolumn{1}{@{ }c@{ }|}{\textbf{(\%)}} & \multicolumn{1}{@{ }c@{ }}{\textbf{(G bytes)}} \\
\hline\hline
\textbf{RangeTree} & 99.97 & 48    & 2.29  & 40    & 2.56  & 89.5 & 89.75 \\
\hline
\textbf{RangeSweep} & -     & -     & 3.52  & 40    & 2.56  & 137.5 & 130.99 \\
\hline\hline
\textbf{SegTree} & 100.00 & 80    & 2.84  & 40    & 5.32  & 53.5 & 113.56 \\
\hline
\textbf{SegSweep} & -     & -     & 3.01  & 40    & 2.56  & 117.7 & 112.13 \\
\hline
\end{tabular}%
\caption{\textbf{The space consumption of our data structures on range and segment queries.} - The theoretical number of inner nodes are estimates as $2n\log n$ for \emph{SegTree}, and $n\log n$ for the rest of them. The ratio is computed as the actual used inner nodes $/$ the theoretical number of inner nodes.}\label{tab:space}
\end{table*}

\section{The PAM library}
\label{app:pam}
The PAM library~\cite{pam} uses the augmented tree structure to implement the interface of the augmented maps.
We give a list of the functions as well as their mathematical definitions in Table \ref{tab:mapfunctionsapp}. Besides the standard functions defined on maps, there are some functions designed specific for augmented maps. Some of the functions are the standard functions (e.g., \mb{Union}, \mb{Insert}) augmented with an operation $\sigma$ to combine values of entries with the same key. In the reasonable scenarios when two entries with the same key should
appear simultaneously in the map, their values would be combined by the function $\sigma$.
For example, when the same key appears in both maps involved in a \mb{Union} or \mb{Intersection},
the key will be kept as is, but their values will be combined by $\sigma$ to be the new value in the result.
Same case happens when an entry is inserted into a map that already has the same key in it, or when we build a map from a sequence that has
multiple entries with the same key.
A common scenario where this is useful for example, is to keep a count for each key, and have
\mb{Union} and \mb{Intersection} sum the counts, \mb{Insert} add the counts, and
\mb{Build} keep the total counts of the same key in the sequence. There are also some functions specific for augmented maps. For example, the \mb{ARange} function returns the augmented value of all entries within a key range,which is especially useful in answering related range queries.

The functions implemented in PAM are work-optimal and have low depth. The sequential and parallel cost of some functions are list in Table \ref{tab:mapfunctions}.

In PAM, an augmented map \texttt{aug\_map<entry>} is defined using C++ templates specifying the key type, value type and the necessary information about the augmented values. The \texttt{entry} structure should contain the follows:
\begin{itemize}
\item \textbf{typename} \texttt{K}: the key type ($K$),
\item \textbf{function} \texttt{comp}: $K\times K \mapsto$ \textbf{bool}: the comparison function on K ($<_K$)
\item \textbf{typename} \texttt{V}: the value type ($V$),
\item \textbf{typename} \texttt{A}: the augmented value type ($A$),
\item \textbf{function} \texttt{base}: $K \times V \mapsto A$: the base function ($g$)
\item \textbf{function} \texttt{combine}: $A \times A \mapsto A$: the combine function ($f$)
\item \textbf{function} \texttt{I}: $\emptyset \mapsto A$: the identity of f ($I$)
\end{itemize}

\begin{table*}
\small\centering
\begin{tabular}{>{\bf}l<{}|c|c|c}
\hline
\textit{\textbf{Function}} &\textit{\textbf{Description}} &\textit{\textbf{Work}}& \textit{\textbf{Depth}}\\
\hline
\texttt{dom}$(m)$ & $\{k(e)\,:\,e\in m\}$&$n$&$\log n$\\
\hline
\texttt{find}$(m,k)$  & \multirow{2}{*}{{$v$ \textbf{if}  $ (k,v) \in m$ \textbf{else} $\Box$}}&\multirow{2}{*}{$\log n$}&\multirow{2}{*}{$\log n$}\\
(or $m[k]$) &&&\\
\hline
\texttt{delete}$(m,k)$ & $\{ (k',v) \in m~|~k' \neq k \}$&$\log n$&$\log n$\\
\hline
\multirow{2}{*}{\texttt{insert$(m,e,\sigma)$}}&\multicolumn{1}{l|}{Argument $\sigma:V\times V\rightarrow V$}&\multirow{2}{*}{$\log n$}&\multirow{2}{*}{$\log n$}\\
\cline{2-2}
& \texttt{delete}$(m,k(e)) \cup \{ (k(e), \sigma(v(e),m[k(e)])) \}$&&\\
\hline
{\texttt{intersect}}& \multicolumn{1}{l|}{Argument $\sigma:V\times V\rightarrow V$}&\multirow{6}{*}{$n_1\log \left(\frac{n_1}{n_2}+1\right)$}&\multirow{6}{*}{$\log n_1\log n_2$}\\
\cline{2-2}
$(m_1, m_2, \sigma)$&
$\{ (k, \sigma(m_1[k],m_2[k]))|\,k \in \texttt{dom}(m_1)\cap \texttt{dom}(m_2)\}$&&\\
\cline{1-2}
\texttt{diff}$(m_1,m_2)$ & $\{e \in m_1~|~ k(e) \not\in \texttt{dom}(m_2)\}$&&\\
\cline{1-2}
\multirow{3}{*}{\texttt{union$(m_1, m_2, \sigma)$}}& \multicolumn{1}{l|}{Argument $\sigma:V\times V\rightarrow V$} &&\\
\cline{2-2}
&\texttt{diff}$(m_1,m_2)~\cup$ \texttt{diff}$(m_2,m_1)\cup$ &&\\
&\texttt{intersect}$(m_1,m_2,\sigma)$ &&\\
\hline
\multirow{3}{*}{\texttt{build}$(s, \sigma)$} & \multicolumn{1}{l|}{Arguments $\sigma:V\times V\rightarrow V, s=\langle e_1, e_2,\dots, e_n \rangle$}&\multirow{3}{*}{$n(\log n)$}&\multirow{3}{*}{$\log n$}\\
\cline{2-2}
& $\{ (k', \sigma(v'_{i_1}, v'_{i_2}, \dots)) \,|$&&\\
&$\, \exists e=(k',v'_{i_j})\in s, i_1<i_2<\dots\}$&&\\
\hline
\texttt{upTo}$(m,k)$ & $\{e \in m\,|\,k(e)\le k\}$ &\multirow{2}{*}{$\log n$}&\multirow{2}{*}{$\log n$}\\
\cline{1-2}
\texttt{range}$(m,k_1,k_2)$ & $\{e \in m\,|\,k_1\le k(e)\le k_2\}$ &&\\
\hline
\texttt{aVal}$(m)$  &$\augvalueone{m}$ &1&1\\
\hline
\texttt{aLeft}$(m,k)$  &$\augvalueone{m'} \,:\, M'=\{e'\,|\, e'\in M, k(e')<k\}$ &\multirow{2}{*}{$\log n$}&\multirow{2}{*}{$\log n$}\\
\cline{1-2}
\texttt{aRange$(m,k_1,k_2)$} &$\augvalueone{M'} \,:\, M'=\{e'\,|\,e'\in M,k_1<k(e')<k_2\}$ &&\\
\hline
\end{tabular}
\caption{\textbf{The core functions on augmented maps} - $k,k_1,k_2, k'\in K$. $v,v_1,v_2\in V$. $e\in K\times V$, $m, m_1, m_2$ are augmented maps, $n=|m|, n_i=|m_i|$, $n'$ is the output size. $s$ is a sequence.  $\Box$ represents an empty element. All bounds are in big-O notation. The bounds assume all the other functions (augmenting functions $f$, $g$ and argument $\sigma$) have constant cost.}
    \label{tab:mapfunctionsapp}
\end{table*}

\section{The Implementation of Sweepline Paradigm}
\label{app:sweepcode}
We implemented the sweepline paradigm introduced in Section \ref{sec:augsweep}. It only requires setting the list of points (in processing order) \texttt{p}, the number of points \texttt{n}, the initial \sweepstructure{} $t_0$ \texttt{Init}, the combine function ($f$) \texttt{f}, the fold function ($\phi$) \texttt{phi} and the update function ($h$) \texttt{h}.

{\ttfamily\footnotesize
\begin{lstlisting}[language=C++,frame=lines,escapechar=@]
template<class Tp, class P, class T, class F,
    class Phi, class H>
T* sweep(P* p, size_t n, T Init, Phi phi,
         F f, H h, size_t num_blocks) {
  size_t each = ((n-1)/n_blocks);
  Tp* Sums = new Tp[n_blocks];
  T* R = new T[n+1];@\vspace{.03in}@	
  // generate partial sums for each block
  parallel_for (size_t i = 0; i < n_blocks-1; ++i) {
    size_t l = i * block_size, r = l + each;
    Sums[i] = phi(p + l, p + r);}@\vspace{.03in}@	
  // Compute the prefix sums across blocks
  R[0] = Init;
  for (size_t i = 1; i < n_blocks; ++i) {
    R[i*block_size] = f(R[(i-1)*each],
        std::move(Sums[i-1])); }
  delete[] Sums;@\vspace{.03in}@	
  // Fill in final results within each block
  parallel_for (size_t i = 0; i < n_blocks; ++i) {
    size_t l = i * each;
    size_t r = (i == n_blocks - 1) ?
        (n+1) : l + each;
    for (size_t j = l+1; j < r; ++j)
      R[j] = h(R[j-1], p[j-1]);
  }
  return R;
}
\end{lstlisting}
}

\section{Using PAM for Computational Geometry Algorithms}
\label{app:codeexample}
We give some examples of our implementations using the PAM library and sweepline paradigm. We give construction code for \emph{RangeTree} and \emph{RangeSwp}, as well as the query code for \emph{RangeSwp}. They have the same inner tree structure. Note that the code shown here is almost \emph{all} code we need to implement these data structures. Comparing with all existing libraries our implementations are much simpler and as shown in the paper, are very efficient.

\para{Range Tree.}
{\ttfamily\footnotesize
\begin{lstlisting}[language=C++,frame=lines,escapechar=@]
template<typename X,typename Y>
struct RangeQuery {
  using P = pair<X,Y>;

  struct inner_map_t {
    using K = Y;
    using V = X;
    static bool comp(K a, K b) { return a < b;}
    using A = int;
    static A base(key_t k, val_t v) {return 1; }
    static A combine(A a, A b) { return a+b; }
    static A I() { return 0;} };
  using inner_map = aug_map<inner_map_t>;

  struct outer_map_t {
    using K = X;
    using V = Y;
    static bool comp(K a, K b) { return a < b;}
    using A = inner_map;
    static A base(K k, V v) {
      return A(make_pair(k.second, k.first)); }
    static A combine(A a, A b) {
      return A::union(a, b); }
    static A I() { return A();} };
  using outer_map = aug_map<outer_map_t>;
  outer_map range_tree;
	
  RangeQuery(vector<P>& p) {
    range_tree = outer_map(p); }
};

\end{lstlisting}
}

\vspace{.3in}
\para{RangeSweep.}


{\ttfamily\footnotesize
\begin{lstlisting}[language=C++,frame=lines,escapechar=@]
template<typename X,typename Y>
struct RangeQuery {
  using P = pair<X, Y>;
  using entry_t = pair<Y, X>;
  struct inner_map_t {
    using K = Y;
    using V = X;
    static bool comp(K a, K b) { return a < b;}
    using A = int;
    static A base(key_t k, val_t v) {return 1; }
    static A combine(A a, A b) { return a+b; }
    static A I() { return 0;} };
  using inner_map = aug_map<inner_map_t>;
  inner_map* ts;
  X* xs;
  size_t n;

  RangeQuery(vector<P>& p) {
    n = p.size();
    Point* A = p.data();
    auto less = [] (P a, P b)
      {return a.first < b.first;};
    parallel_sort(A, n, less);

    xs = new X[n];
    entry_t *vs = new entry_t[n];
    parallel_for (size_t i = 0; i < n; ++i) {
      xs[i] = A[i].first;
      vs[i] = entry_t(A[i].second, A[i].first); }

    auto insert = [&] (inner_map m, entry_t a) {
      return inner_map::insert(m, a);   };
    auto fold = [&] (entry_t* s, entry_t* e) {
      return inner_map(s,e);  };
    auto combine = [&] (inner_map m1, c_map m2) {
      return inner_map::union(m1, std::move(m2));};
    ts = sweep<inner_map>(vs, n, inner_map(),
                        insert, fold, combine); }

  int query(X x1, Y y1, X x2, Y y2) {
    size_t l = binary_search(xs, x1);
    size_t r = binary_search(xs, x2);
    size_t left = (l<0) ? 0 : ts[l].aug_range(y1,y2);
    size_t right = (r<0) ? 0 : ts[r].aug_range(y1,y2);
    return right-left; }
\end{lstlisting}
}

\end{document}